\renewcommand\footnotetextcopyrightpermission[1]{}
\newtheorem{theorem}{Theorem}
\newtheorem{lemma}{Lemma}
  \providecommand\BibTeX{{%
    \normalfont B\kern-0.5em{\scshape i\kern-0.25em b}\kern-0.8em\TeX}}}
\begin{document}

\title{CDFGNN: a Systematic Design of Cache-based Distributed Full-Batch Graph Neural Network Training with Communication Reduction}




\author{Shuai Zhang}
\affiliation{%
 \institution{Meituan}
 \city{Beijing}
 \country{China}}
 \email{zhangshuai122@meituan.com}

\author{Zite Jiang}
\affiliation{%
 \institution{SKL Computer Architecture, Institute of Computing Technology, Chinese~Academy~of~Sciences}
 \city{Beijing}
 \country{China}}
 \email{jiangzite19s@ict.ac.cn}
 
 \author{Haihang You*}
\affiliation{%
 \institution{SKL Computer Architecture, Institute of Computing Technology, Chinese~Academy~of~Sciences}
 \city{Beijing}
 \country{China}}
 \email{youhaihang@ict.ac.cn}


\begin{abstract}
Graph neural network training is mainly categorized into mini-batch and full-batch training methods.
The mini-batch training method samples subgraphs from the original graph in each iteration.
This sampling operation introduces extra computation overhead and reduces the training accuracy.
Meanwhile, the full-batch training method calculates the features and corresponding gradients of all vertices in each iteration, and therefore has higher convergence accuracy.
However, in the distributed cluster, frequent remote accesses of vertex features and gradients lead to huge communication overhead, thus restricting the overall training efficiency.

In this paper, we introduce the cached-based distributed full-batch graph neural network training framework (CDFGNN). 
We propose the adaptive cache mechanism to reduce the remote vertex access by caching the historical features and gradients of neighbor vertices.
Besides, we further optimize the communication overhead by quantifying the messages and designing the graph partition algorithm for the hierarchical communication architecture.
Experiments show that the adaptive cache mechanism reduces remote vertex accesses by $63.14\%$ on average.
Combined with communication quantization and hierarchical GP algorithm, CDFGNN outperforms the state-of-the-art distributed full-batch training frameworks by $30.39\%$ in our experiments.
Our results indicate that CDFGNN has great potential in accelerating distributed full-batch GNN training tasks.

\end{abstract}



\keywords{Graph Neural Network, Distributed Training, Machine Learning System}


\maketitle

\section{Introduction}
With the rise of large-scale pre-training models, the demand for distributed training based on heterogeneous architecture is also increasing. 
As an important deep learning structure, graph neural network (GNN)~\cite{kipf2016semi} has been applied in natural language processing, computer vision, knowledge graphs, etc.
Compared with traditional graph algorithms, the graph neural network often requires computation on heterogeneous devices. Besides, the graph neural network needs to send the features and gradients of vertices across devices in each iteration, which brings huge communication overhead. Therefore, designing an efficient heterogeneous distributed graph neural network training framework is a challenging and engaging research area. 

The training of distributed graph neural network can be categorized into full-batch training~\cite{kipf2016semi,velickovic2018yoshua} and mini-batch training~\cite{hamilton2017inductive,chen2018fastgcn, huang2018adaptive,zeng2019graphsaint,dong2021global}. The main difference between them is whether the entire graph data is involved in each iteration. For the full-batch training method, an iteration contains the model computation phase (including forward propagation and back propagation) and the parameter update phase. For mini-batch training, an additional sampling phase needs to be added. The sampling phase needs to be performed before the model computation phase and the parameter update phase. In the sampling phase, subgraphs are sampled from the entire graph for the current training iteration. Therefore, for the full-batch training, one training epoch is equivalent to one iteration. For the mini-batch training, one training epoch often consists of multiple iterations.

Many mini-batch (sample-based) distributed GNN training methods have been proposed recently. However, these mini-batch training methods lead to problems such as information loss~\cite{cai2021dgcl,jia2020improving,tripathy2020reducing}, additional sampling overhead~\cite{jia2020improving}, and unable to guarantee convergence~\cite{chen2017stochastic}. Therefore, in this paper, we focus on another distributed training strategy: full-batch training.

Compared with traditional graph algorithms or deep learning algorithms, distributed full-batch graph neural network training brings new system-level problems.
The GNN training process has irregular neighbor vertex access and iterative computation at the same time. 
Therefore, graph neural network training is also characterized by both memory access intensive and computing intensive tasks~\cite{thorpe2021dorylus,wang2020gnn}.
In the distributed environment, there is also a problem of intensive communication for the full-batch training methods. 
During the full-batch GNN training, both the model parameters and neighbor vertex data (features and gradients) need to be transmitted across the device. 
Due to the huge communication volume of vertex features and gradients, efficient full-batch GNN training is extremely difficult.

In this paper, we focus on reducing the communication overhead during distributed full-batch graph neural network training. 
Considering that the changes of model parameters during GNN training are usually very slight, we cache historical features and gradients of vertices to reduce the cross-device neighbor vertex access.
In addition, we adopt the quantization method to compress communication messages.
We further design the hierarchical graph partition algorithm to reduce the number of communication messages across physical nodes (at the expense of the extra messages across different GPUs within the same physical node).

Specifically, our main contributions are as follows:
\begin{itemize}
    \item We propose the \textbf{c}ache-based \textbf{d}istributed \textbf{f}ull-batch \textbf{g}raph \textbf{n}eural \textbf{n}etwork training method CDFGNN. By adaptively caching vertex-level historical features and gradients, we can greatly reduce the communication overhead without affecting the convergence accuracy and the number of iterations required for convergence.
    \item We quantify the vertex features and gradients during communication in CDFGNN to further reduce communication overhead.
	\item We design the graph partition algorithm to adapt to the communication characteristics of the hierarchical hardware architecture.
    \item Experiments show that CDFGNN can greatly reduce the communication overhead during distributed full-batch graph neural network training and thus improve the overall training efficiency.
\end{itemize}

This paper is organized as follows:
Section~\ref{sec:back} discusses the challenges of distributed GNN training and explains our motivation.
Section~\ref{sec:CDFGNN} introduces the computation and communication architecture of CDFGNN.
Section~\ref{sec:cache} proposes the adaptive cache mechanism for vertex features and gradients and theoretically proves the convergence of this mechanism. 
Section~\ref{sec:quantify} and section~\ref{sec:hierarchical} describes the quantization method and the hierarchical graph partition algorithm.
Section~\ref{sec:exp} presents and analyzes several experiments, which demonstrate the characteristics and capabilities of CDFGNN.
Finally, we review the related work, conclude our approach, and preview the future project in Section~\ref{sec:related} and Section~\ref{sec:conclusion}.


\section{Background and Motivation}~\label{sec:back}

\subsection{Background}

\begin{figure}[!tb]
	\centering
	\includegraphics[width=0.9\linewidth]{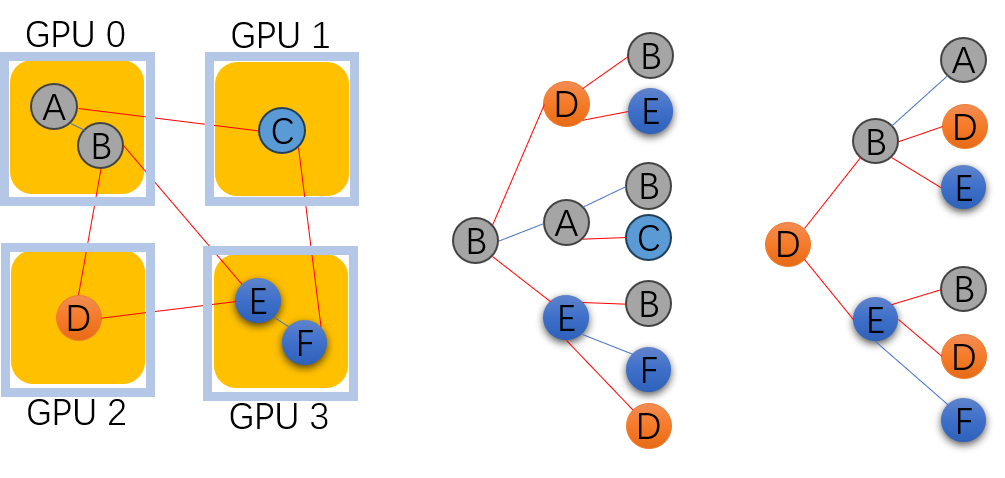}
	\caption{Distributed full-batch GNN Training.}
	\label{fig:full_batch}
\end{figure}


The distributed full-batch GNN training methods require the original graph to be partitioned into several subgraphs, and each computing device (CPU or GPU) only keeps its own subgraph.
The corresponding vertex features are also split and assigned to each device.
Thus, the computation of the entire graph can be completed in just one iteration.

During the training process, each computing device saves a copy of the current model parameters to enable local computation. Therefore, for the full-batch GNN training, the model parameter synchronization is also needed after each iteration.

For both GCN~\cite{kipf2016semi} and GAT~\cite{velivckovic2017graph} models, the vertex features and gradients of all neighbor vertices are required to calculate the features and gradients of the certain vertex during the forward and backward propagation in each layer.
In distributed clusters, such large-scale cross-device data access brings serious communication overhead and becomes a bottleneck of the overall computation. 
Besides, load balancing among the various devices is also important. This is because load imbalance not only results in computational load imbalance, but also communication imbalance.

Figure~\ref{fig:full_batch} shows the training process of a distributed graph neural network with $6$ vertices. Vertices on the same device (GPU) are represented by the same color, and red edges identify edges across GPUs.

The right side is the computational graph of the two-layer graph neural network for vertex ``B'' and vertex ``D''. In order to obtain the final vertex features, $7$ and $6$ cross-device communication messages are required for ``B'' and ``D'' respectively. Each message contains high-dimensional vertex features. When performing backward propagation, the same number of vertex gradients is also required. Therefore, cross-device communication becomes an important bottleneck for efficient training. The overall communication overhead may even account for about $80\%$ of the total training time~\cite{cai2021dgcl,gandhi2021p3,tripathy2020reducing}.

For the distributed mini-batch GNN training, we need to sample graphs before model computation. Thus, an iteration of distributed mini-batch training consists of three stages: sampling, model computation, and model parameter synchronization.


These mini-batches can be sampled by the computing device itself, or sampled by a dedicated sampling device. Each computing device independently executes forward propagation and backward propagation on its corresponding subgraph. 
After the computation stage is completed, these computing devices synchronize and accumulate the gradients to update the model parameters.

\begin{figure}[!htbp]
    \centering
    \begin{minipage}{0.5\linewidth}
    \centerline{\includegraphics[width=1\textwidth]{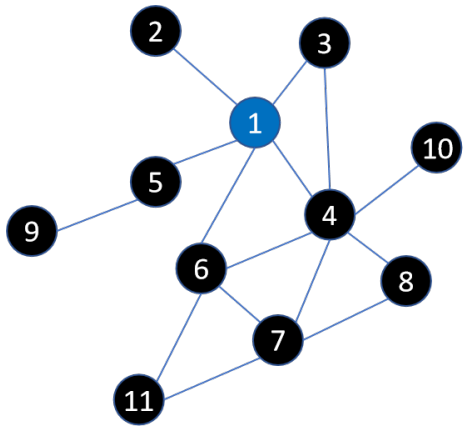}}
	\end{minipage}
	\hfill
	\begin{minipage}{0.35\linewidth}
    \centerline{\includegraphics[width=1\textwidth]{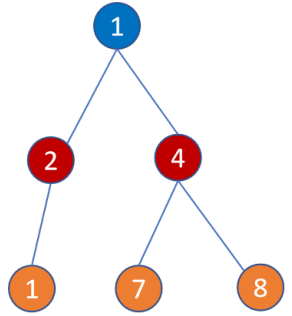}}
	\end{minipage}\\[0.2mm]
	\vfill
    \begin{minipage}{0.5\linewidth}
		\centerline{(a) The original graph}
	\end{minipage}
	\hfill
	\begin{minipage}{0.35\linewidth}
		\centerline{(b) Sampled graph}
	\end{minipage}
    \caption{\quad The sample process of mini-batch training.}
    \label{fig:sample}
\end{figure}

Figure~\ref{fig:sample} shows a $2$-hop sampling process on the original graph.
For the $L$-layers graph neural network, in order to calculate the vertex features, (at least part of) L-hop neighbor vertices need to be included in the sampled subgraph.
In figure~\ref{fig:sample}, for calculating vertex $1$, we additionally add parts of its $2$-hop neighbor vertices to the subgraph.
For graphs with high connectivity and small diameter (such as power-law graphs), even few vertices sampled will generate a large subgraph.
This phenomenon results in significant extra computational overhead.
Although we can restrict the maximum number of sampled neighbor vertices as in figure~\ref{fig:sample}, it will directly reduce the model accuracy.

Compared with the full-batch distributed GNN training, the computation stage of mini-batch training is executed independently on sampled subgraphs, thus avoiding the remote vertex access. 
However, the sampling process also incurs additional computational overhead, including the sampling itself and extra vertex calculations. 
In addition, the mini-batch GNN training often reduces the model accuracy.

\subsection{Motivation}
The frequent and expensive remote neighbor vertex access restricts the scalability of distributed full-batch GNN training.
To overcome this challenge, we can optimize it from the following perspectives:
\begin{itemize}
    \item \textbf{Frequency}: Cache neighbor vertex data instead of executing remote access in each iteration,
    \item \textbf{Expensive}: Compress the message size,
    \item \textbf{Remote}: Make full use of the hierarchical communication architecture.
\end{itemize}

For GNN training tasks, the model parameters tend to stabilize after several training epochs.
Besides, the training process does not require high-precision vertex features and gradients before the model converges.
Therefore, we cache and reuse historical vertex features and gradients during training to reduce communication overhead, especially in the middle stage of the training process.

In order to compress the message size, we quantify the communication messages. These messages include the model parameter gradients and remote neighbor vertex features and gradients. The scale of vertex features and gradients in the GNN training is much larger than the model parameters. 
Meanwhile, when there are small errors in the vertex features and gradients, the final convergence performance will not be significantly reduced, and sometimes it can even prevent the training process from falling into a local optimal solution.
Therefore, we compress the vertex features and gradients during communication by quantifying.

Finally, we analyze the communication characteristics of heterogeneous clusters and find that using the PCIe to communicate between different GPUs in the same physical node is more efficient (higher bandwidth and lower latency) than network communication (InfiniBand) across physical nodes.
Therefore, we propose a graph partition algorithm to reduce the number of messages across physical nodes at the cost of increasing communication within physical nodes.
\section{CDFGNN Architecture}~\label{sec:CDFGNN}
In this section, we take the graph convolutional network (GCN) as an example to describe the computation and communication stage of CDFGNN.

\begin{figure}[!tb]
	\centering
	\includegraphics[width=0.9\linewidth]{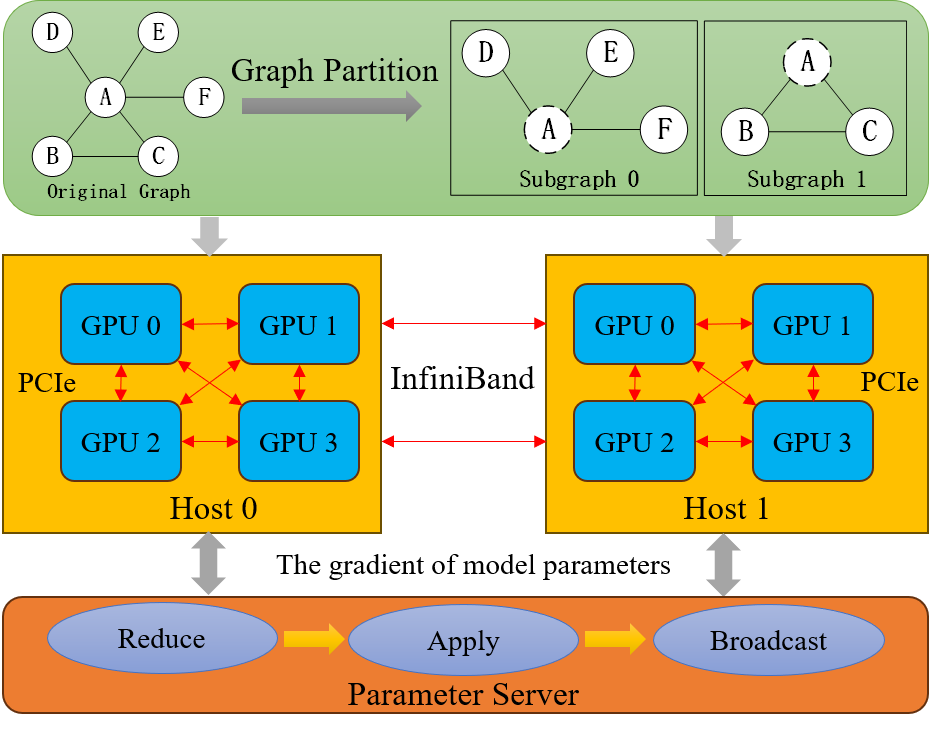}
	\caption{The workflow of CDFGNN.}
	\label{fig5:gnn_comm}
\end{figure}

Figure~\ref{fig5:gnn_comm} shows the overall computing and communication workflow of CDFGNN. CDFGNN first needs to perform the graph partitioning (GP) algorithm to partition the graph (and corresponding input features) into subgraphs equal to the number of computing devices (GPUs).
Different from the traditional full-batch graph neural network training framework, we adopt the \textbf{vertex-cut} GP algorithm. The vertex-cut GP is considered a better approach to handle power-law graphs common in the real world~\cite{gonzalez2012powergraph,chen2019powerlyra}.
Figure~\ref{fig:gather_scatter} demonstrates partition results of the vertex-cut GP algorithm.
In this example, vertex ``B'' exists in all $3$ subgraphs and we choose one of these replicas as the master vertex while others as mirror vertices.

We describe the single iteration distributed training in the algorithm~\ref{alg:CDFGNN}. $L$ refers to the number of layers of the GCN network, and the model parameters of each layer are represented as $W^{(0)}, \cdots, W^{(L-1)}$. Next, we describe the computation and communication stage in detail.

\begin{algorithm}[!tb]
	\caption{CDFGNN Workflow}
	\label{alg:CDFGNN}
    \DontPrintSemicolon
	\KwIn{Graph $G(V,E)$, Sparse Matrix $\hat{A}_i$, Input feature $H^{(0)}_i$, Current Model Parameter $W$.}
	\KwOut{Output Feature $H^{(L)}_i$.}
	
	\For{all process $P(i)$ parallel} {
		$//$ Layer-by-layer forward propagation: \\
		\For{$l = 1, \cdots, L$} {
		    $\ddot{Z}^{(l)}_i \gets \hat{A}_iH^{(l-1)}_iW^{(l-1)}$ \;
		    Synchronize by communication to get $Z^{(l)}_i$ \;
		    $H^{(l)}_i \gets \sigma \left( Z^{(l)}_i \right)$ \;
		}
		
		$//$ Layer-by-layer backward propagation:\\
		Compute Loss Function $\mathcal{L}_i$ and $\ddot{\delta}_i^{(L)}$ \\
		\For{$l = L, \cdots, 1$} {
		    Synchronize by communication to get $\delta_i^{(l)}$. \;
		    $\ddot{\delta}^{(l-1)}_i \gets \delta_i^{(l)} \hat{A}_i \left( W^{(l-1)} \right)^\text{T} \cdot \sigma' \left( Z_i^{(l-1)} \right)$ \;
		    $\nabla_{W^{(l-1)}}\mathcal{L}_i \gets \delta_i^{(l)}\hat{A}_i \left( H_i^{(l-1)} \right)^\text{T}$ \;
		    Parameter Server aggregate $\nabla_{W^{(l-1)}}\mathcal{L}_i$, update and broadcast parameters: \\
		    $W^{(l-1)} = W^{(l-1)} - \eta \sum\limits_{i = 1}^p \nabla_{W^{(l-1)}}\mathcal{L}_i$
		}
	}
\end{algorithm}

\subsection{Computation Stage of CDFGNN}
In the computation stage, each GPU independently performs graph neural network computation tasks on its corresponding subgraph.
We use the BSP model~\cite{valiant1990bridging} to achieve synchronization of vertex features through communication.

Let $A_i$ be the adjacency matrix of subgraph $i$ and $D_i$ be the corresponding submatrix in the original degree matrix.
$\hat{A}_i = D_i^{-1/2}A_iD_i^{-1/2}$ is the normalized adjacency matrix of the subgraph in the computing device $i$.
We use superscript $\ddot{}$ to represent the intermediate matrix values ($\ddot{Z}^{(l)}_i$ and $\ddot{\delta}^{(l-1)}_i$) calculated only from local subgraphs, and the corresponding expressions without this superscript indicate the value ($Z^{(l)}_i$ and $\delta^{(l-1)}_i$) after communication synchronization.

During the forward propagation of GCN, we calculate the vertex feature $H^{(l)}_i$ of the $l$-th layer in the subgraph $i$ as 
\begin{equation}\label{equ:forward_1}
    \ddot{Z}^{(l)}_i = \hat{A}_iH^{(l-1)}_iW^{(l-1)},
\end{equation}
\begin{equation}\label{equ:forward_2}
    H^{(l)}_i = \sigma \left( Z^{(l)}_i \right).
\end{equation}
We calculate $\ddot{Z}^{(l)}_i$ with the local vertex feature $H^{(l-1)}_i$, local normalized adjacency matrix $\hat{A}_i$ and the global model parameter $W^{(l-1)}$.
For restoring the ``real'' $Z_i^{(l)}$ (the same as the value during the sequential training), we need to synchronize and aggregate $\ddot{Z}^{(l)}_i$ from each device through communication.
The communication stage will be introduced in section~\ref{sec:cdfgnn-comm}.

According to $Z^{(l)}_i$, we can calculate the input $H^{(l)}_i$ of the next layer.
$H^{(l)}_i \in \mathbb{R}^{|V_i| \times F_i}$, where $F_i$ refers to the vertex feature dimension of the $i$-th layer. 
By iteratively executing equations~\ref{equ:forward_1} and~\ref{equ:forward_2}, we can complete the calculation of forward propagation layer by layer.

During the backward propagation, we only calculate the loss value of the master vertices when calculating the loss function $\mathcal{L}$.
Thus, we can avoid repeated calculations of gradients on multiple replicas.

We use $\mathcal{L}$ to represent the loss function in the global and $\mathcal{L}_i$ to represent its component on subgraph $i$, while $\sum\limits_{i = 1}^p \mathcal{L}_i = \mathcal{L}$. 
When calculating the gradient, we define $\delta^{(l)}_i = \nabla_{Z^{(l)}_i}\mathcal{L}$ to represent the gradient of the global loss function $\mathcal{L}$ with respect to the global variable $Z^{(l)}_i$, and $\ddot{\delta}^{(l)}_i = \nabla_{\ddot{Z}^{(l)}_i}\mathcal{L}_i$ to represent the gradient of the local loss function $\mathcal{L}_i$ with respect to the local variable $\ddot{Z}^{(l)}_i$. For calculating $\ddot{\delta}^{(l-1)}_i$, we have
\begin{equation}
\begin{aligned}
\ddot{\delta}^{(l-1)}_i = & \frac{\partial \mathcal{L}_i}{\partial \ddot{Z}_i^{(l-1)}} = \frac{\partial \mathcal{L}_i}{\partial Z_i^{(l-1)}} \\
= & \frac{\partial \mathcal{L}_i}{\partial Z_i^{(l)}} \frac{\partial Z_i^{(l)}}{\partial \ddot{Z}_i^{(l)}}
\frac{\partial \ddot{Z}_i^{(l)}}{\partial H^{(l-1)}_i} \frac{\partial H^{(l-1)}_i}{\partial Z_i^{(l-1)}}
\\
= & \delta_i^{(l)} \hat{A}_i \left( W^{(l-1)} \right)^\text{T} \cdot \sigma' \left( Z_i^{(l-1)} \right).
\end{aligned}
\end{equation}
Note that $Z_i^{(l)}$ is calculated with the sum aggregation of $\ddot{Z}_j^{(l)}, j \in [1, p]$, thus we have $\frac{\partial Z_i^{(l)}}{\partial \ddot{Z}_i^{(l)}} = 1$ exists for all subgraph $i$.
Similar with $\ddot{Z}_i^{(l)}$, we can also get $\delta_i^{(l-1)}$ by aggregating $\ddot{\delta}_i^{(l-1)}$ from each device through communication.

With $\delta_i^{(l)}$, the gradient of the model parameter $W^{(l-1)}$ can be calculated as 
\begin{equation} \label{equ5:dw}
    \nabla_{W^{(l-1)}}\mathcal{L}_i = \frac{\partial \mathcal{L}_i}{\partial W^{(l-1)}}
    = \frac{\partial \mathcal{L}_i}{\partial Z_i^{(l)}} \frac{\partial \ddot{Z}_i^{(l)}}{\partial W^{(l-1)}}
    = \delta_i^{(l)}\hat{A}_i \left( H_i^{(l-1)} \right)^\text{T}.
\end{equation}

When performing parameter updates, we need to summarize the gradients calculated on all subgraphs as
\begin{equation}
    W^{(l)} = W^{(l)} - \eta \sum\limits_{i = 1}^p \nabla_{W^{(l)}}\mathcal{L}_i .
\end{equation}
This process also needs to be implemented through communication. 
However, the data size of model parameters is usually much smaller than the data size of neighbor vertex features and gradients.
Thus, the communication overhead of aggregating model parameters is not the performance bottleneck.

In summary, during one iteration (forward + backward) of one GCN layer, there are two communication synchronizations for vertex values (features and gradients). This communication is to obtain the global intermediate value $Z^{(l)}$ in the forward propagation and to obtain the $\delta^{(l)}$ in the backward propagation. Through these communication synchronizations, the calculated model parameter gradients are theoretically consistent with the single-device full-batch training method.


\subsection{Communication Stage of CDFGNN}\label{sec:cdfgnn-comm}
In the real world, most of the data graphs processed by graph neural network algorithms are power-law graphs~\cite{albert2002statistical}, such as social networks, citation graphs, etc.
We adopt the vertex-cut GP algorithm, which is more efficient for power-law graphs.
In figure~\ref{fig:gather_scatter}, we demonstrate the communication pattern for vertex ``B''.
We use the gray vertex in subgraph $1$ to mark this vertex ``B'' as a master vertex, while others are mirror vertices.
In the computation stage, these replicas compute their intermediate values $\ddot{Z}^{(l)}_i$ and $\ddot{\delta}^{(l)}_i$ independently. 
We need to aggregate these values through communication to achieve the same value as when executing on a single device.

\begin{figure}[!tb]
	\centering
	\includegraphics[width=0.9\linewidth]{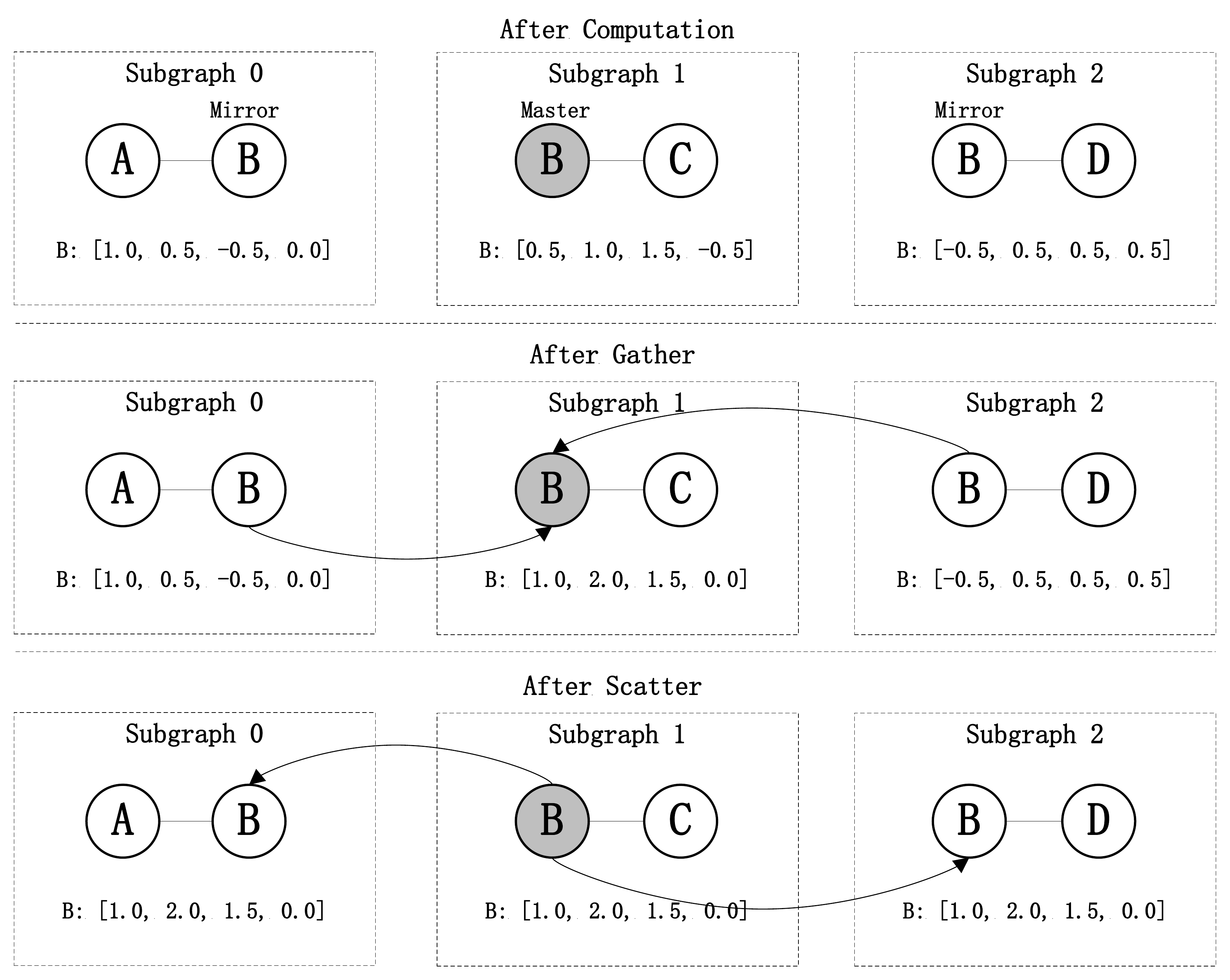}
	\caption{The communication pattern of CDFGNN.}
	\label{fig:gather_scatter}
\end{figure}

CDFGNN takes each vertex as the minimum communication unit.
The communication stage can be divided into two phases: \textbf{gather} and \textbf{scatter}.
In the gather phase, the mirror vertex sends its values to the corresponding master vertex (with the same vertex ID). When the master vertex receives these messages, it should collect 
 them and sum them with its own values.
In the scatter phase, the master vertex sends its aggregated values back to all corresponding mirror vertices.
The mirror vertex uses the received values to replace the original values.
In figure~\ref{fig:gather_scatter}, we list the values of vertex ``B'' at different communication phases in all subgraphs.

This communication pattern requires the mirror vertex to store the location of its master vertex, and the master vertex to store the locations of all its mirrors.
By executing the communication stage, we can ensure that the states of the vertex replicas are consistent with the sequential GNN training.


\section{Adaptive Vertex Feature Cache}\label{sec:cache}
In this section, we introduce the adaptive cache mechanism of CDFGNN and prove its convergence.

\subsection{Adaptive Cache Mechanism}
In order to reduce the expensive vertex feature and gradient communication during the CDFGNN training process, we propose an adaptive vertex-level caching mechanism.
Specifically, we cache the intermediate variables $Z_i^{(l)}$ and $\delta^{(l)}_i$ during the training process.

For $Z_i^{(l)}$ and $\delta^{(l)}_i$, we adopt the same cache mechanism.
For convenience, we take $Z^{(l)}_i$ as the example to introduce the caching mechanism in detail.
Firstly, we denote $\ddot{Z}^{(l)}_i = \{ z_{i,1}, \cdots, z_{i,|V_i|} \}$, where $z_{i,j}$ represents the feature vector corresponding to the $j$-th vertex of subgraph $i$ in $\ddot{Z}^{(l)}_i$. 
For each subgraph, we renumber the vertices with a local ID for continuous memory access.
The $j$-th vertex here refers to the vertex with local ID $j$ of subgraph $i$.

Let $\tilde{z}_{i,j}$ be the cached value of $z_{i,j}$, and $\tilde{z}_{\cdot, j}$ be the corresponding cached value in $Z^{(l)}_i$.
For each computing device, it should keep the cached value $\tilde{z}_{i,j}$ and $\tilde{z}_{\cdot, j}$ for all vertices in their own subgraphs.


\begin{algorithm}[!tb]
	\caption{Adaptive Vertex Cache Mechanism}
	\label{alg:gnn_cache}
    \DontPrintSemicolon
	\KwIn{current value $z_{i,u}$, cached value $\tilde{z}_{i,u}$ and $\tilde{z}_{\cdot,u}$, threshold $\epsilon$.}
	\KwOut{cached value $\tilde{z}_{i,u}$ and $\tilde{z}_{\cdot,u}$.}
	
	\For{all process $P(i)$ parallel} {
		$//$ Traverse mirror vertices: \\
		\For{$u \in getMirrorVertices()$} {
		    \If{$\|z_{i,u} - \tilde{z}_{i,u}\|_{\infty} > \epsilon \|\tilde{z}_{i,u}\|_{\infty}$} {
                Send the difference value $\Delta_{z_{i,u}} = z_{i,u} - \tilde{z}_{i,u}$ to the corresponding master vertex \\
		        $\tilde{z}_{i,u} \gets z_{i,u}$ \\
		    }
		}
		
		Bulk Synchronize! Wait for messages from all processes to be sent! \\ 
		$//$ Traverse messages and master vertices:\\
		\For{$(u, \Delta_{z_{i,u}}) \in messages$} {
		    $\tilde{z}_{\cdot, u} \gets \tilde{z}_{\cdot, u} + \Delta_{z_{i,u}}$ \\
		    active vertex $u$. \\
		}
		
		\For{$u \in getMaster()$} {
		    \If{$\|z_{i,u} - \tilde{z}_{i,u}\|_{\infty} > \epsilon \|\tilde{z}_{i,u}\|_{\infty}$} {
		        $\tilde{z}_{\cdot, u} \gets \tilde{z}_{\cdot, u} + z_{i,u} - \tilde{z}_{i,u}$  \\
		        $\tilde{z}_{i,u} \gets z_{i,u}$ \\
		        active vertex $u$.
		    }
		}
		
		\For{$u \in$ active vertices} {
            Send the cached value $\tilde{z}_{\cdot, u}$ to the corresponding mirror vertices. \\
		}
	}
\end{algorithm}

The algorithm~\ref{alg:gnn_cache} describes the update strategy of the cached values $\tilde{z}_{i,j}$ and $\tilde{z}_{\cdot, j}$. 
In each forward propagation of the GNN layer, we need to perform this algorithm once.
After the update process is completed, we generate the matrix $Z^{(l)}_i$ by directly combining the cached value $\tilde{z}_{\cdot, j}$.

For the cache mechanism, $\tilde{z}_{i,j}$ keeps the values used by computing device $i$ when building the cached value $\tilde{z}_{\cdot, j}$. When the difference between $\tilde{z}_{i,j}$ and the real value $z_{i,j}$ calculated in current iteration is too large, we need to update $\tilde{z}_{i,j}$ and $\tilde{z}_{\cdot, j}$ for avoiding the large error.
We use $\frac{\|z_{i,u} - \tilde{z}_{i,u}\|_{\infty}}{\|\tilde{z}_{i,u}\| _{\infty}}$ to measure the error. $\|\cdot\|_{\infty}$ is the $L_\infty$ norm, which can be used to represent the maximum absolute value of all elements in it.


We expect $\tilde{z}_{\cdot, j}$ to be consistent across all relevant computing devices.
Thus, when the $\tilde{z}_{i,j}$ of any computing device changed, we need to synchronize it to all other replicas.

In order to increase the proportion of cached values as much as possible without reducing the convergence accuracy or increasing the number of iterations for convergence, we design an adaptive caching mechanism by dynamically adjusting the threshold $\epsilon$.
We update $\epsilon$ by
\begin{equation} \label{equ5:epsilon}
\epsilon  = \left\{ 
    \begin{aligned}
    &\min(\lambda_1 \epsilon, \epsilon + \xi) ,& & acc < mean_{acc} - \mu_1, \epsilon < \nu_1 \cr 
    &\max(\lambda_2 \epsilon, \epsilon - \xi) ,& & acc > mean_{acc} + \mu_2, \epsilon > \nu_2 \cr 
    &\epsilon ,& & otherwise \cr 
    \end{aligned} \right.
\end{equation}
After each iteration, the value of $\epsilon$ is updated. Where $acc$ is the model accuracy on the train set in the current epoch, and $mean_{acc}$ is the exponential moving average of $acc$:
\begin{equation}
mean_{acc} = 0.8 \times mean_{acc} + 0.2 \times acc.
\end{equation}
For the remaining hyperparameters, they are set by default to $\mu_1 = 0.001$, $\mu_2 = 0.02$, $\nu_1 = 0.3$, $\nu_2 = 0.001$, $\xi = 0.01$, $\lambda_1 = 1.05$ and $\lambda_2 = 0.9$ in our experiments.

Among these hyperparameters, we set $\mu_1$ to be much larger than $\mu_2$. This is because in the early stage of training, the accuracy on the training set increases rapidly. 
Only when there is a large enough accuracy increment (larger than $\mu_2$) can we consider that the current cache threshold should be relaxed.
After the model parameters are stabilized, the accuracy of the model on the training set changes slightly.
Therefore, even for small accuracy decreases, the threshold should be set smaller to reduce the cache error. 
In addition, we also use $\xi = 0.02$ to define the maximum step size when $\epsilon$ changes to avoid the error threshold changing too quickly.
We also use $\nu_1$ and $\nu_2$ to limit the value range of $\epsilon$ to $[\nu_2, \nu_1]$.
The settings of these hyperparameters ensure that the training accuracy of the model will not be greatly reduced.

\subsection{Proof of Convergence}
Next, we prove the convergence of the training process when employing the adaptive cache mechanism. 
Specifically, we will prove that after a finite number of iterations, the model parameters $W$ will converge to the local optimal solution $W^*$.
We use the superscript $\tilde{}$ to represent the value obtained in this layer after communication synchronization when the cache mechanism is used.
The values without superscripts represent the values obtained by current model parameters and input features without cache mechanism in all layers.


We first lay out the necessary and basic inequality required for the theoretical analysis.

\begin{lemma}~\label{lemma:basic}
Denote $\| A \|_\infty = \max_{i,j}|A_{i,j}|$, $col(A)$ is the column number of matrix
$A$.
We have $\| A + B \|_\infty \leq \|A\|_\infty + \|B\|_\infty$, $\| A \cdot B \|_\infty \leq \|A\|_\infty \|B\|_\infty$ and $\| A B \|_\infty \leq col(A) \|A \|_\infty \| B\|_\infty$.
\end{lemma}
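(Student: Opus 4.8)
The plan is to treat all three claims uniformly. Since $\|\cdot\|_\infty$ is the entrywise maximum-modulus norm, each inequality reduces to a pointwise estimate on a single entry of the left-hand matrix, followed by taking the maximum over all index pairs. First I would fix an arbitrary index pair $(i,j)$, bound the corresponding entry of the left-hand side, and then pass to the maximum over $(i,j)$ on both sides; this immediately yields the stated bound since the right-hand sides do not depend on $(i,j)$.

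For the additive bound I would apply the scalar triangle inequality entrywise: $|(A+B)_{i,j}| = |A_{i,j}+B_{i,j}| \le |A_{i,j}| + |B_{i,j}| \le \|A\|_\infty + \|B\|_\infty$, and maximize. For the second inequality I read the symbol $\cdot$ as the Hadamard (entrywise) product, consistent with its use in the backward-propagation formula for $\ddot{\delta}^{(l-1)}_i$; then $(A\cdot B)_{i,j} = A_{i,j}B_{i,j}$, so $|(A\cdot B)_{i,j}| = |A_{i,j}|\,|B_{i,j}| \le \|A\|_\infty\|B\|_\infty$, and the maximum gives the claim. For the ordinary matrix product I would expand $(AB)_{i,j} = \sum_{k=1}^{col(A)} A_{i,k}B_{k,j}$, bound each of the $col(A)$ summands by $\|A\|_\infty\|B\|_\infty$, and sum to obtain $|(AB)_{i,j}| \le col(A)\,\|A\|_\infty\|B\|_\infty$ before maximizing.

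There is essentially no hard step here; the lemma is an elementary collection of norm estimates assembled for later use in the convergence analysis. The only point demanding care is notational: I must keep the Hadamard product (written $\cdot$) distinct from matrix multiplication (written by juxtaposition), since only the latter incurs the dimension factor $col(A)$. I should also note that in the third claim the summation length $col(A)$ coincides with the number of rows of $B$, so that the product $AB$ is well defined and the bound is dimensionally consistent.
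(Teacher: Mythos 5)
Your proposal is correct and follows essentially the same route as the paper's own proof: entrywise triangle inequality for the sum, entrywise (Hadamard) product bound for $\cdot$, and expansion of the matrix product into $col(A)$ summands each bounded by $\|A\|_\infty\|B\|_\infty$. Your explicit remark that $\cdot$ denotes the Hadamard product (matching its use in the backward-propagation formula) is a helpful clarification that the paper leaves implicit.
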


\begin{proof}
These three inequalities can be proved as follows:
\begin{equation}
\begin{aligned}
    \| A + B \|_\infty &= \max_{i,j}|A_{i,j} + B_{i,j}| \\
           & \leq \max_{i,j}|A_{i,j}| + \max_{i,j}|B_{i,j}| \\
           & = \|A\|_\infty + \|B\|_\infty,
\end{aligned}
\end{equation}
\begin{equation}
\begin{aligned}
    \| A \cdot B \|_\infty &= \max_{i,j}|A_{i,j} \times B_{i,j}| \\
           & \leq \max_{i,j}|A_{i,j}| \times \max_{i,j}|B_{i,j}| \\
           & = \|A\|_\infty \|B\|_\infty,
\end{aligned}
\end{equation}
\begin{equation}
\begin{aligned}
    \| A B \|_\infty &= \max_{i,j} |\sum\limits_{k=1}^{col(A)} A_{i,k} \times B_{k,j} | \\
                     & \leq col(A) \max\limits_{i,j,k}|A_{i,k} \times B_{k,j}| \\
                     & \leq col(A) \max\limits_{i,j}|A_{i,j}| \max\limits_{i,j}|B_{i,j}| \\
                     & = col(A) \|A \|_\infty \| B\|_\infty.
\end{aligned}
\end{equation}
\end{proof}

Next, we state that with bounded staleness on the embeddings, the approximations of the intermediate matrix results are close to the exact ones in the forward propagation.
\begin{lemma}
For the forward propagation of CDFGNN with the cache mechanism, if (a) we have $\| \tilde{Z}^{(l-1)} - Z^{(l-1)} \|_\infty \leq \epsilon_{Z^ {(l-1)}}$, while $\tilde{Z}^{(l-1)}$ and $Z^{(l-1)}$ represent the intermediate values with or without cache mechanism, (b) the function $\sigma(\cdot)$ is $\rho$-Lipschitz continuous, (c) the elements in $Z^{(l)}$, $\hat{A}$ and $W^{(l-1)}$ are bounded, while the absolute values are less than $B$ and the number of columns is less than $C$. Then we have $\|\tilde{H}^{(l-1)} - H^{(l-1)} \|_\infty \leq \rho \epsilon_{Z^{(l-1)} }$ and $\|\tilde{Z}^{(l)} - Z^{(l)} \|_\infty \leq p\nu_1 B + C^2B^2\rho \epsilon_{Z^{ (l-1)}}$.
\end{lemma}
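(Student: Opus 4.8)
The plan is to treat the two claimed bounds separately: the first is a one-line consequence of Lipschitz continuity, while the second requires decomposing the error into two independent sources, one from the stale input embeddings and one from the threshold-skipping in Algorithm~\ref{alg:gnn_cache}. For the first inequality I would argue entrywise. Since $H^{(l-1)} = \sigma(Z^{(l-1)})$ and $\tilde{H}^{(l-1)} = \sigma(\tilde{Z}^{(l-1)})$ with $\sigma$ applied coordinatewise, hypothesis (b) gives $|\sigma(\tilde{Z}^{(l-1)}_{i,j}) - \sigma(Z^{(l-1)}_{i,j})| \leq \rho\,|\tilde{Z}^{(l-1)}_{i,j} - Z^{(l-1)}_{i,j}|$ for every entry; taking the maximum over $(i,j)$, using the definition of $\|\cdot\|_\infty$, and invoking hypothesis (a) yields $\|\tilde{H}^{(l-1)} - H^{(l-1)}\|_\infty \leq \rho\,\epsilon_{Z^{(l-1)}}$ at once.

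For the second inequality the key idea is to insert an auxiliary matrix that isolates the two error sources. I would introduce $\bar{Z}^{(l)} = \hat{A}\tilde{H}^{(l-1)}W^{(l-1)}$, the value a full threshold-free synchronization would produce from the stale embeddings $\tilde{H}^{(l-1)}$, and split
\begin{equation}
\|\tilde{Z}^{(l)} - Z^{(l)}\|_\infty \leq \|\tilde{Z}^{(l)} - \bar{Z}^{(l)}\|_\infty + \|\bar{Z}^{(l)} - Z^{(l)}\|_\infty
\end{equation}
by the subadditivity from Lemma~\ref{lemma:basic}. The second term is purely embedding propagation: writing $\bar{Z}^{(l)} - Z^{(l)} = \hat{A}(\tilde{H}^{(l-1)} - H^{(l-1)})W^{(l-1)}$ and applying the product inequality $\|AB\|_\infty \leq col(A)\|A\|_\infty\|B\|_\infty$ twice, together with hypothesis (c) (so $\|\hat{A}\|_\infty, \|W^{(l-1)}\|_\infty \leq B$ and the relevant column counts are at most $C$) and the first inequality ($\|\tilde{H}^{(l-1)} - H^{(l-1)}\|_\infty \leq \rho\,\epsilon_{Z^{(l-1)}}$), bounds it by $C^2B^2\rho\,\epsilon_{Z^{(l-1)}}$.

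The first term is where the cache mechanism enters. Here $\bar{Z}^{(l)}$ is the exact sum aggregation of the current local contributions over a vertex's replicas, whereas $\tilde{Z}^{(l)}$ reuses, for each replica that fails the update test, its stale cached contribution. For such a replica the test guarantees $\|z_{i,u} - \tilde{z}_{i,u}\|_\infty \leq \epsilon\|\tilde{z}_{i,u}\|_\infty$; since $\epsilon \leq \nu_1$ and the cached values, being approximations of entries of $Z^{(l)}$, are bounded by $B$, each per-replica staleness is at most $\nu_1 B$. A vertex appears in at most $p$ subgraphs, so the aggregated discrepancy per vertex, and hence $\|\tilde{Z}^{(l)} - \bar{Z}^{(l)}\|_\infty$, is at most $p\nu_1 B$. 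Adding the two bounds gives the stated estimate.

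The main obstacle I anticipate is the bookkeeping in the embedding-propagation term rather than any deep inequality. I must apply the column-count factors of the product inequality to the right matrices so that summing the local contributions over the $p$ replicas reconstructs the global product $\hat{A}(\tilde{H}^{(l-1)} - H^{(l-1)})W^{(l-1)}$ without inflating the bound by a spurious factor of $p$; concretely, $C$ has to be read as controlling the per-row neighbour count of the reconstructed adjacency and the feature dimension, so that the replica sum is already absorbed into the two $C$ factors. Keeping the two error sources cleanly decoupled through the auxiliary matrix $\bar{Z}^{(l)}$ is precisely what lets the $p\nu_1 B$ and $C^2B^2\rho\,\epsilon_{Z^{(l-1)}}$ constants come out as claimed.
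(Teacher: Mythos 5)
Your proposal is correct and follows essentially the same route as the paper: your auxiliary matrix $\bar{Z}^{(l)} = \hat{A}\tilde{H}^{(l-1)}W^{(l-1)}$ is exactly the paper's intermediate quantity $\hat{Z}^{(l)}$ (cache used in earlier layers but not in layer $l$), and the triangle-inequality split into a cache-staleness term bounded by $p\nu_1 B$ and a propagated-error term bounded by $C^2B^2\rho\,\epsilon_{Z^{(l-1)}}$ via Lemma~\ref{lemma:basic} matches the paper's proof step for step. If anything, your justification of the $p\nu_1 B$ term from the per-replica threshold test in Algorithm~\ref{alg:gnn_cache} is spelled out in more detail than the paper's one-line assertion.
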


\begin{proof}
We denote $\hat{Z}^{(l)}$ as the intermediate value when the caching mechanism is used in the previous $l-1$ layers, but not used in the $l$-th layer.
Considering that each element in $\hat{Z}^{(l)}$ is the sum from at most $p$ device, the upper bound error of $\tilde{Z}^{(l)}$ for using the cache mechanism in layer $l$, is
\begin{equation}
   \| \tilde{Z}^{(l)} - \hat{Z}^{(l)} \|_\infty \leq p\nu_1 B .
\end{equation}
Where $\nu_1$ is the upper bound of $\epsilon$ defined in the equation (\ref{equ5:epsilon}). 

Therefore, we have
\begin{equation} \label{equ5:lipschitz}
    \begin{aligned}
        \|\tilde{H}^{(l-1)} - H^{(l-1)} \|_\infty = & \| \sigma( \tilde{Z}^{(l-1)}) - \sigma(Z^{(l-1)}) \|_\infty \\
        \leq & \rho \epsilon_{Z^{(l-1)}}
    \end{aligned}
\end{equation}
\begin{equation}
    \begin{aligned}
        \| & \tilde{Z}^{(l)} - Z^{(l)} \|_\infty = \|(\tilde{Z}^{(l)} - \hat{Z}^{(l)}) + 
        (\hat{Z}^{(l)} - Z^{(l)}) \|_\infty \\
        \leq & p\nu_1 B + \|\hat{A} \sigma ( \tilde{Z}^{(l-1)}) W^{(l-1)} - \hat{A}\sigma ( Z^{(l-1)})W^{(l-1)} \|_\infty \\
        \leq & p\nu_1 B + C^2B^2\rho \epsilon_{Z^{(l-1)}}
    \end{aligned}
\end{equation}
The equation (~\ref{equ5:lipschitz}) is obtained from the definition of Lipschitz condition.
\end{proof}

Next, we will prove that the intermediate gradient $\tilde{\delta}^{(l)} = \nabla_{\tilde{Z}^{(l)}} \tilde{\mathcal{L}}$ with cache mechanism is also close to the exact gradient $\delta^{(l)} = \nabla_{Z^{(l)}} \mathcal{L}$.

\begin{lemma} \label{lemma:prove3}
For the backward propagation of CDFGNN, if (a) we have $\| \tilde{Z}^{(l-1)} - Z^{(l-1)} \|_\infty \leq \epsilon_{Z^ {(l-1)}}$, while $\tilde{Z}^{(l-1)}$ and $Z^{(l-1)}$ represent the intermediate values with or without cache mechanism, 
(b) the function $\sigma(\cdot)$ and the derivative of loss function $\nabla \mathcal{L}$ are $\rho$-Lipschitz continuous, (c) the elements in $\delta^{(l)}$, $\hat{A}$, $\sigma'(Z^{(l)})$ and $W^{(l-1)}$ are bounded, and their absolute values are less than $B$ and the number of columns is less than $C$. Then we have $\| \nabla_{\tilde{Z}^{(l)}} \tilde{\mathcal{L}} - \nabla_{Z^{(l)}} \mathcal{L} \|_\infty$ and $ \| \nabla_{W^{(l)}}\tilde{\mathcal{L}} - \nabla_{W^{(l-1)}}\mathcal{L} \|_\infty$ are also bounded.
\end{lemma}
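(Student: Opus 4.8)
The plan is to mirror the structure of the forward-propagation lemma (Lemma 2), but to run the induction \emph{downward} through the layers, from $l = L$ to $l = 1$. The backward recurrence $\delta^{(l-1)} = \delta^{(l)} \hat{A} (W^{(l-1)})^\text{T} \cdot \sigma'(Z^{(l-1)})$ plays the same role here that the forward recurrence played before, so I would again introduce an auxiliary quantity $\hat{\delta}^{(l-1)}$ computed with the cache mechanism in all layers above $l-1$ but with an exact (non-cached) aggregation at layer $l-1$ itself. Using the triangle inequality from Lemma~\ref{lemma:basic}, this splits the total error as $\| \tilde{\delta}^{(l-1)} - \delta^{(l-1)} \|_\infty \leq \| \tilde{\delta}^{(l-1)} - \hat{\delta}^{(l-1)} \|_\infty + \| \hat{\delta}^{(l-1)} - \delta^{(l-1)} \|_\infty$, separating the fresh cache-staleness error introduced at layer $l-1$ from the error propagated down from the layers above.

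For the first term, exactly as in Lemma 2, each coordinate of $\delta^{(l-1)}$ is a sum of contributions from at most $p$ devices, and the adaptive threshold guarantees each contribution is stale by at most $\epsilon \leq \nu_1$ times a bounded quantity, giving $\| \tilde{\delta}^{(l-1)} - \hat{\delta}^{(l-1)} \|_\infty \leq p \nu_1 B$. For the second term I would expand $\hat{\delta}^{(l-1)} - \delta^{(l-1)}$ and insert a cross term to isolate the two sources of inexactness: the upstream gradient $\tilde{\delta}^{(l)}$ versus $\delta^{(l)}$, and $\sigma'(\tilde{Z}^{(l-1)})$ versus $\sigma'(Z^{(l-1)})$. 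Repeated application of the three inequalities in Lemma~\ref{lemma:basic} (using $col(\cdot) \leq C$ for each matrix product and the bound $B$ for $\hat{A}$, $W^{(l-1)}$, $\sigma'$ and $\delta^{(l)}$) reduces this to a bound of the form $C^2 B^3 \| \tilde{\delta}^{(l)} - \delta^{(l)} \|_\infty + C^2 B^3 \| \sigma'(\tilde{Z}^{(l-1)}) - \sigma'(Z^{(l-1)}) \|_\infty$. The inductive hypothesis controls the first summand, and the forward bound $\| \tilde{Z}^{(l-1)} - Z^{(l-1)} \|_\infty \leq \epsilon_{Z^{(l-1)}}$ from Lemma 2, together with Lipschitz continuity, controls the second.

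The base case is the top layer: $\tilde{\delta}^{(L)} - \delta^{(L)} = \nabla_{\tilde{Z}^{(L)}} \tilde{\mathcal{L}} - \nabla_{Z^{(L)}} \mathcal{L}$, bounded directly by the $\rho$-Lipschitz continuity of $\nabla \mathcal{L}$ applied to the Lemma-2 bound on $\| \tilde{Z}^{(L)} - Z^{(L)} \|_\infty$. Once $\| \tilde{\delta}^{(l)} - \delta^{(l)} \|_\infty$ is known to be bounded at every layer, the weight-gradient claim follows from the identity $\nabla_{W^{(l-1)}} \mathcal{L} = \delta^{(l)} \hat{A} (H^{(l-1)})^\text{T}$ (equation~\ref{equ5:dw}): I would insert a cross term to separate the error in $\delta^{(l)}$ from the error in $H^{(l-1)}$, bound the matrix products with Lemma~\ref{lemma:basic}, and invoke the bound $\| \tilde{H}^{(l-1)} - H^{(l-1)} \|_\infty \leq \rho \epsilon_{Z^{(l-1)}}$ from Lemma 2 for the second piece.

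The main obstacle I anticipate is twofold. First, the statement assumes only that $\sigma$ (not $\sigma'$) is $\rho$-Lipschitz, yet bounding $\| \sigma'(\tilde{Z}^{(l-1)}) - \sigma'(Z^{(l-1)}) \|_\infty$ in terms of $\| \tilde{Z}^{(l-1)} - Z^{(l-1)} \|_\infty$ genuinely requires Lipschitz continuity of $\sigma'$, i.e. second-order smoothness of $\sigma$; I would either strengthen assumption (b) accordingly or argue that $\sigma'(\tilde{Z})$ stays in a bounded neighborhood so its contribution can be absorbed into the $B$-bounds of (c). Second, because the induction runs over all $L$ layers, the per-layer factor $C^2 B^3$ compounds, so the final constant grows geometrically in the depth; since the lemma asserts only \emph{boundedness} rather than smallness this is acceptable, but care is needed to track the closed-form constant and to confirm that every intermediate quantity ($\sigma'(\tilde{Z})$, $\tilde{H}^{(l-1)}$, $\tilde{\delta}^{(l)}$) remains within the bounded regime assumed in (c).
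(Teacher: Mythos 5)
Your proposal follows essentially the same route as the paper's proof: a downward induction over layers, with the base case $\| \nabla_{\tilde{Z}^{(L)}} \tilde{\mathcal{L}} - \nabla_{Z^{(L)}} \mathcal{L} \|_\infty \leq \rho \epsilon_{Z^{(L)}}$ obtained from the Lipschitz continuity of $\nabla \mathcal{L}$, a cross-term splitting of the backward recurrence bounded via Lemma~\ref{lemma:basic}, and the weight-gradient claim handled through equation~(\ref{equ5:dw}) combined with the forward-propagation bounds $\|\tilde{H}^{(l)} - H^{(l)}\|_\infty \leq \rho\epsilon_{Z^{(l)}}$. Two of your refinements, however, go beyond what the paper writes, and both are improvements. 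First, your decomposition through the auxiliary $\hat{\delta}^{(l-1)}$ introduces a fresh cache-staleness term $p\nu_1 B$ at every backward layer; the paper's proof identifies $\tilde{\delta}^{(l)}$ with the exact recurrence applied to cached inputs and thus never accounts for the fact that the gradients $\delta^{(l)}_i$ are themselves cached and synchronized only when the threshold in Algorithm~\ref{alg:gnn_cache} is exceeded. Since the lemma claims only boundedness, both versions suffice, but yours is the more faithful to the mechanism the lemma is about. Second, you correctly observe that bounding $\| \sigma'(\tilde{Z}^{(l)}) - \sigma'(Z^{(l)}) \|_\infty$ by $\rho \epsilon_{Z^{(l)}}$ requires Lipschitz continuity of $\sigma'$ (i.e.\ second-order smoothness of $\sigma$), not of $\sigma$ itself; the paper performs exactly this step while hypothesis (b) assumes only that $\sigma$ is $\rho$-Lipschitz, so your proposed strengthening of the assumption repairs a gap the paper's own proof leaves open. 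Your concern about the per-layer constant compounding matches the paper's recursion $K^{(l)} = C^2B^3(\rho \epsilon_{Z^{(l)}} + K^{(l+1)})$ and is, as you say, harmless for a boundedness claim.
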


\begin{proof}
First, we prove that $\|\tilde{\delta}^{(l)} - \delta^{(l)} \|_\infty$ is bounded based on the previous lemma.

For the last layer $L$, we have
\begin{equation}
    \| \nabla_{\tilde{Z}^{(L)}} \tilde{\mathcal{L}} - \nabla_{Z^{(L)}} \mathcal{L} \|_\infty
    \leq \rho \epsilon_{Z^{(L)}} .
\end{equation}

Next, we use mathematical induction to complete the proof. For $l' > l$, if it satisfies $\| \nabla_{\tilde{Z}^{(l')}} \tilde{\mathcal{L}} - \nabla_{Z^{(l')}} \mathcal{L} \|_\infty \leq K^{(l')}$, then for the $l$-th layer, we have
\begin{equation}
    \begin{aligned}
        & \| \nabla_{\tilde{Z}^{(l)}} \tilde{\mathcal{L}} - \nabla_{Z^{(l)}} \mathcal{L} \|_\infty \\
        =& \| \tilde{\delta}^{(l + 1)} \hat{A} \left( W^{(l)} \right)^\text{T} \cdot \sigma' \left( \tilde{Z}^{(l)} \right) - \delta^{(l + 1)} \hat{A} \left( W^{(l)} \right)^\text{T} \cdot \sigma' \left( Z^{(l)} \right) \| _\infty \\
        \leq & C^2 \{ \|\tilde{\delta}^{(l + 1)} \|_\infty \|\hat{A} \|_\infty \| \left( W^{(l)} \right)^\text{T} \|_\infty \| \sigma' \left( \tilde{Z}^{(l)} \right) - \sigma' \left( Z^{(l)} \right) \|_\infty  \\
        +&  \|\tilde{\delta}^{(l + 1)} - \delta^{(l + 1)} \|_\infty \|\hat{A} \|_\infty \| \left( W^{(l)} \right)^\text{T} \|_\infty \| \sigma' \left( Z^{(l)} \right) \|_\infty \} \\
        \leq & C^2(B^3 \rho \epsilon_{Z^{(l)}} + K^{(l+1)} B^3) = C^2B^3(\rho \epsilon_{Z^{(l)}} + K^{(l+1)})
    \end{aligned}
\end{equation}
Denote $K^{l} = C^2B^3(\rho \epsilon_{Z^{(l)}} + K^{(l+1)})$, then we can find that the assumption holds for the $l$-th layer.
Therefore, we can complete the proof according to mathematical induction.

For $\| \nabla_{W^{(l)}}\tilde{\mathcal{L}} - \nabla_{W^{(l)}}\mathcal{L} \|_\infty$, we can get it according to the equation (\ref{equ5:dw}):
\begin{equation}
    \begin{aligned}
        \| \nabla_{W^{(l)}}&\tilde{\mathcal{L}} - \nabla_{W^{(l)}}\mathcal{L} \|_\infty \\
        = & \| \tilde{\delta}_i^{(l+1)}\hat{A}_i \left( \tilde{H}_i^{(l)} \right)^\text{T} - \delta_i^{(l+1)}\hat{A}_i \left( H_i^{(l)} \right)^\text{T} \|_\infty  \\
        \leq & C^2 \{ \|\tilde{\delta}_i^{(l+1)}\|_\infty \|\hat{A} \|_\infty 
        \| \left( \tilde{H}_i^{(l)} \right)^\text{T} - \left( H_i^{(l)} \right)^\text{T} \|_\infty\\
        + & \|\tilde{\delta}_i^{(l+1)} - \delta_i^{(l+1)} \|_\infty \|\hat{A} \|_\infty 
        \| \left( H_i^{(l)} \right)^\text{T} \|_\infty \} \\
        \leq & C^2 (B^2 \rho \epsilon_{Z^{(l)}} + K^{l+1}B^2) = C^2B^2 (\rho \epsilon_{Z^{(l)}} + K^{(l+1)})
    \end{aligned}
\end{equation}
\end{proof}

Finally, we will prove that CDFGNN can converge to the local optimal solution under the premise that the error is bounded. For the parameter matrix $W$, we use the subscript $i$ to identify that the value is obtained of the $i$-th iteration.

\begin{theorem}
    For the $L$ layer graph neural network training based on the CDFGNN cache mechanism, given the local optimal parameters $W_{(*)}$ and the initial parameters $W_{(1)}$. Assuming that (a) the activation function $\sigma(\cdot)$ and the derivative of loss function $\nabla \mathcal{L}$ are $\rho$-Lipschitz continuous, (b) the matrix $\hat{A}$, $H$ and $W$, and the corresponding gradients on them are bounded, where the maximum absolute value of the element is $B$, (c) the function $\mathcal{L}(W)$ is $\rho$-smooth.
    We can prove that there is a constant $K > 0$ such that for $\forall N > L_\epsilon$, 
    if the GNN is trained based on the cache mechanism $R$ iterations ($R \in [1, N]$ and is sampled from $[1, \dots, N]$ uniformly) and the learning rate $\eta = \min\left(\frac{1}{\rho}, \frac{1}{\sqrt{N}}\right)$, we have
    \begin{equation}
        \mathrm{E}_R \| \nabla_{W_{(R)}} \mathcal{L} \|^2_F
        \leq 2 \frac{\mathcal{L}(W_{(1)}) - \mathcal{L}(W_{(*)}) + \frac{\rho K}{2}}{\sqrt{N}}.
    \end{equation}
\end{theorem}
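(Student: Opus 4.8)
The plan is to read the cache-based update as \emph{inexact} gradient descent on the smooth nonconvex objective $\mathcal{L}$ and to run the standard nonconvex descent argument, carrying the cache bias as a controllable perturbation. Writing the update as $W_{(t+1)} = W_{(t)} - \eta\,\tilde g_{(t)}$ with $\tilde g_{(t)} = \nabla_{W_{(t)}}\tilde{\mathcal{L}}$ the cached gradient and $g_{(t)} = \nabla_{W_{(t)}}\mathcal{L}$ the exact one, the $\rho$-smoothness assumption gives the descent lemma
\begin{equation}
\mathcal{L}(W_{(t+1)}) \leq \mathcal{L}(W_{(t)}) - \eta\,\langle g_{(t)}, \tilde g_{(t)}\rangle + \frac{\rho\eta^2}{2}\,\|\tilde g_{(t)}\|_F^2 .
\end{equation}
The entire proof is organized around turning this single inequality into a per-step decrease of $\mathcal{L}$ that is quadratic in $\|g_{(t)}\|_F$ up to an additive cache error, and then summing.

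First I would substitute $\tilde g_{(t)} = g_{(t)} + e_{(t)}$, where $e_{(t)}$ is the cache-induced gradient error whose magnitude is exactly what Lemma~\ref{lemma:prove3} controls. Expanding the inner product and the squared norm and using $\eta \leq 1/\rho$ to keep the coefficient of $\|g_{(t)}\|_F^2$ at $-\tfrac{\eta}{2}$, the aim is to collect every $e_{(t)}$-dependent contribution into one additive term and reach
\begin{equation}
\frac{\eta}{2}\,\|g_{(t)}\|_F^2 \;\leq\; \mathcal{L}(W_{(t)}) - \mathcal{L}(W_{(t+1)}) + \frac{\rho\eta^2}{2}\,\|e_{(t)}\|_F^2 .
\end{equation}
The delicate point is the cross term $\langle g_{(t)}, e_{(t)}\rangle$: it must be dispatched so that the residual bias scales like $\eta^2$ rather than $\eta$, since a bias entering only linearly in $\eta$ would survive the averaging and leave a non-vanishing floor instead of the claimed $1/\sqrt{N}$ rate.

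Next I would telescope over $t = 1, \dots, N$; the successive $\mathcal{L}$ values collapse, and using $\mathcal{L}(W_{(N+1)}) \geq \mathcal{L}(W_{(*)})$ the right-hand side is at most $\mathcal{L}(W_{(1)}) - \mathcal{L}(W_{(*)}) + \tfrac{\rho\eta^2}{2}\sum_t\|e_{(t)}\|_F^2$. Dividing by $\eta N/2$ and recognizing that, under uniform sampling of $R$ from $\{1,\dots,N\}$, one has $\tfrac{1}{N}\sum_t\|g_{(t)}\|_F^2 = \mathrm{E}_R\|\nabla_{W_{(R)}}\mathcal{L}\|_F^2$, the left side becomes exactly the quantity in the statement. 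Substituting $\eta = \min(1/\rho, 1/\sqrt{N})$, which equals $1/\sqrt{N}$ once $N$ exceeds $L_\epsilon$, turns the $\tfrac{2}{\eta N}$ prefactor into $\tfrac{2}{\sqrt{N}}$ and, given a uniform per-step bound $\|e_{(t)}\|_F^2 \leq K$, makes the error contribution $\rho\eta K = \rho K/\sqrt{N}$; the two terms combine into $\frac{2\,(\mathcal{L}(W_{(1)}) - \mathcal{L}(W_{(*)}) + \rho K/2)}{\sqrt{N}}$.

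The hard part is producing that uniform constant $K$. Lemma~\ref{lemma:prove3} only bounds $\|e_{(t)}\|_\infty$ in terms of the instantaneous embedding staleness $\epsilon_{Z^{(l)}}$, propagated across the $L$ layers through the constants $B$ and $C$ of Lemma~\ref{lemma:basic} and the forward-propagation lemma; converting these $L_\infty$ estimates into the Frobenius norm of the theorem forces the matrix dimensions into $K$. The genuinely load-bearing step is arguing the staleness is itself uniformly bounded: for $t > L_\epsilon$ the adaptive rule~\eqref{equ5:epsilon} confines the threshold to $[\nu_2,\nu_1]$ while the parameters have stabilized, so each per-vertex deviation is capped by $\nu_1$ times the cached magnitude and the $p$-device aggregation error cannot exceed $p\nu_1 B$. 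Folding these layer-wise bounds into one constant independent of $N$, while simultaneously ensuring the cache bias enters the descent at quadratic order as noted above, is the crux; settling for an $N$-dependent $\sum_t\|e_{(t)}\|_F^2$ would degrade the rate.
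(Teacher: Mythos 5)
Your proposal takes essentially the same route as the paper's proof: the $\rho$-smoothness descent lemma applied to the cached-gradient update $W_{(i+1)} = W_{(i)} - \eta\nabla_{W_{(i)}}\tilde{\mathcal{L}}$, decomposition of the cached gradient into the exact gradient plus an error $\Delta_{(i)}$, a per-step inequality in which the error term enters at order $\eta^2$, a uniform bound $\|\Delta_{(i)}\|_F^2 \leq K$ drawn from Lemma~\ref{lemma:prove3} and the boundedness assumptions, telescoping over $N$ steps against $\mathcal{L}(W_{(*)})$, and the choice $\eta = \min\left(\frac{1}{\rho}, \frac{1}{\sqrt{N}}\right)$ to obtain the $\frac{1}{\sqrt{N}}$ rate. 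The one ``delicate point'' you flag but do not resolve --- dispatching the cross term $\langle \nabla_{W_{(i)}}\mathcal{L}, \Delta_{(i)}\rangle$ so that the bias remains at quadratic order in $\eta$ --- is precisely the step the paper itself passes over (it drops the combined cross term $-\eta(1-\rho\eta)\langle \nabla_{W_{(i)}}\mathcal{L}, \Delta_{(i)}\rangle$ with only the remark that the scaling ``is based on the value of the learning rate''), so your outline matches the paper's argument at the same level of rigor.
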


\begin{proof}
For the convenience, we denote $\Delta_{(i)} = \nabla_{W_{(i)}}\tilde{\mathcal{L}} - \nabla_{W_{(i)}} \mathcal{L}$.
Considering that the model parameter $W$ is updated under the cache mechanism, we have $W_{(i+1)} = W_{(i)} + \eta \nabla_{W_{(i)}} \tilde{\mathcal{L}}$.
According to lemma~\ref{lemma:prove3} and the $\rho$-smooth property of the function $\mathcal{L}(W)$, we have
\begin{equation}
\begin{aligned}
    \mathcal{L}&(W_{(i+1)}) = \mathcal{L}(W_{(i)} + \eta \nabla_{W_{(i)}} \tilde{\mathcal{L}}) \\
    \leq & \mathcal{L}(W_{(i)}) - \eta \langle \nabla_{W_{(i)}} \mathcal{L},\nabla_{W_{(i)}}\tilde{\mathcal{L}} \rangle
    + \frac{\rho}{2} \eta^2 \| \nabla_{W_{(i)}}\tilde{\mathcal{L}} \|_F^2 \\
    = & \mathcal{L}(W_{(i)}) - \eta \langle \nabla_{W_{(i)}} \mathcal{L}, \Delta_{(i)} \rangle 
    -\eta \| \nabla_{W_{(i)}} \mathcal{L} \|_F^2 \\
    + & \frac{\rho}{2} \eta^2 \left( \| \Delta_{(i)} \|^2_F + \| \nabla_{W_{(i)}} \mathcal{L} \|^2_F
    + 2 \langle \Delta_{(i)}, \nabla_{W_{(i)}} \mathcal{L} \rangle \right) \\
    \leq & \mathcal{L}(W_{(i)}) - (\eta - \frac{\rho}{2} \eta ^ 2) \| \nabla_{W_{(i)}} \mathcal{L} \|^2_F + \frac{\rho}{2} \eta^2 \| \Delta_{(i)} \|^2_F .
\end{aligned}
\end{equation}

The scaling in the last step is based on the value of the learning rate $\eta$.
According to lemma~\ref{lemma:prove3}, we have $\| \Delta_{(i)} \|^2_F \leq \| \nabla_{W_(i)}\tilde{\mathcal{L}} \ |_\infty^2 + \| \nabla \mathcal{L}(W_{(i)}) \|_\infty^2 \leq 2B^2 \leq K$. Therefore, we have
\begin{equation} \label{equ5:13}
    \mathcal{L}(W_{(i+1)}) \leq \mathcal{L}(W_{(i)}) - (\eta - \frac{\rho}{2} \eta ^ 2) \| \nabla_{W_{(i)}} \mathcal{L} \|^2_F + \frac{\rho}{2} \eta^2 K.
\end{equation}
Sum up the equation (\ref{equ5:13}) for $i$ from $1$ to $N$, we can get
\begin{equation} \label{equ5:14}
    (\eta - \frac{\rho}{2} \eta ^ 2) \sum\limits_{i=1}^N \| \nabla_{W_{(i)}} \mathcal{L} \|^2_F \leq
    \mathcal{L}(W_{(1)}) - \mathcal{L}(W^*) +  \frac{\rho}{2} \eta^2 K N.
\end{equation}

Considering $\eta = \min\left(\frac{1}{\rho}, \frac{1}{\sqrt{N}}\right)$, we divide both side of equation (\ref{equ5:14}) by $N(\eta - \frac{\rho}{2} \eta ^ 2)$, then we have
\begin{equation}
\begin{aligned}
    \mathrm{E}_R \| \nabla_{W_{(R)}} \mathcal{L} \|^2_F = & \frac{1}{N} \sum\limits_{i=1}^N \| \nabla_{W_{(i)}} \mathcal{L} \|^2_F \\
    \leq & 2 \frac{\mathcal{L}(W_{(1)}) - \mathcal{L}(W^*) + \frac{\rho}{2} \eta^2KN}{N \eta (2 - \rho \eta)} \\
    \leq & 2 \frac{\mathcal{L}(W_{(1)}) - \mathcal{L}(W^*)}{N \eta} + \rho \eta K \\
    \leq & 2 \frac{\mathcal{L}(W_{(1)}) - \mathcal{L}(W^*) + \frac{\rho K}{2}}{\sqrt{N}}.
\end{aligned}
\end{equation}
When $N \to \infty$, we can find that the expectation of parameter gradient $\mathrm{E}_R \to 0$. Therefore, we show that convergence of parameters can be achieved in finite iterations.

\end{proof}
\section{Communication Quantization} \label{sec:quantify}
In this section, we propose the communication quantization mechanism of CDFGNN. There are many quantization methods, including linear quantization and logarithmic quantification~\cite{daisuke2016convolutional}, exponential quantification~\cite{li2019additive},
differentiable quantization~\cite{gong2019differentiable,yang2019quantization}, etc.
Considering that when we adopt the adaptive cache mechanism, the message sent is the difference value instead of the original value.
Thus, the message data usually follows an uniform distribution. 
For this reason, we adopt the simplest linear quantization method to quantify the difference of vertex features and gradients. We do not quantify the model parameters when communicating with the parameter server.

Specifically, for the calculated difference $\mathbf{m}$ of features or gradients for the vertex $v_i$, it is represented in the form of a 32-bit floating point format in the GPU memory. In order to quantify it into the $B$-bit unsigned integer format, we need to calculate the maximum element value $\max(\mathbf{m})$ and the minimum element value $\min(\mathbf {m})$ at first. Therefore, we can get the quantified value as
\begin{equation}
    q_i =  \left \lfloor \frac{2^B (m_i - \min(\mathbf{m}))}{\max(\mathbf{m}) - \min(\mathbf{m})} + 0.5  \right\rfloor .
\end{equation}
When sending the message, the original message size is $T*L$, and the quantified message size is $B*L + 2T$ (including the maximum and minimum value). Where $L$ refers to the number of elements in $\mathbf{m}$, and $T$ refers to the number of bits of the original data format.

During the recovery, for the quantization value $q_i$, we can restore it to
\begin{equation}
    \tilde{m}_i = \frac{\max(\mathbf{m}) - \min(\mathbf{m})}{2^B} q_i + \min(\mathbf{m}).
\end{equation}
By the definition,  we have
$\left \lfloor \frac{2^B (m_i - \min(\mathbf{m}))}{\max(\mathbf{m}) - \min(\mathbf{m})} - 0.5  \right\rfloor < q_i \leq \left \lfloor \frac{2^B (m_i - \min(\mathbf{m}))}{\max(\mathbf{m}) - \min(\mathbf{m})} + 0.5  \right\rfloor$
. Therefore, the upper bound of the quantization error is $\frac{\max(\mathbf{m}) - \min(\mathbf{m})}{2^{B+1}}$.

\section{Hierarchical Graph Partition Algorithm} \label{sec:hierarchical}
Considering that in the heterogeneous multi-node multi-GPU environment, the communication overhead within a single node and across physical nodes is different.
We demonstrate the communication architecture in figure~\ref{fig5:gnn_comm}.
The GPU is viewed as the basic computing device.

We propose our vertex-cut graph partition algorithm based on the EBV~\cite{zhang2021efficient} algorithm. To adapt to the hierarchical communication architecture,  we rewrite its evaluation function 
\begin{equation} 
\label{equ:eva3}
\begin{aligned}
Eva_{(u, v)}(i) =& (1-\gamma) (\mathbb{I}(i \notin d\_rep_u) + \mathbb{I}(i \notin d\_rep_v)) \\
+& \gamma( \mathbb{I}(host_i \notin h\_rep_u) + \mathbb{I}(host_i \notin h\_rep_v)) \\
+& \alpha \frac{e_{count}[i]}{|E| / p} + \beta \frac{v_{count}[i]}{|V|/p} .
\end{aligned}
\end{equation}
$d\_rep_u$ and $h\_rep_u$ represent the GPU IDs and host (CPU) IDs that vertex $u$ has been assigned. As long as the vertex $u$ has been assigned to any GPU corresponding to the host, the host ID will be added to $h\_rep_u$.
We use $host_i$ to represent the host ID to which the $i$-th GPU belongs.
Besides, $e_{count}[i]$ and $v_{count}[i]$ mean the number of edges and vertices that have been assigned to subgraph $i$.

When partitioning the graph, we assign it edge by edge. For each edge, we select the GPU ID that minimizes the evaluation function as the subgraph ID this edge assigned.


From equation (\ref{equ:eva3}), we can found that the term $\mathbb{I}(host_i \notin host\_rep_u) + \mathbb{I}(host_i \notin host\_rep_v)$ we design can reduce the number of cut vertices between hosts.
Usually, we set $\gamma \ll 1$.
Therefore, this term is mainly worked to select a more reasonable host when the other terms are close. In our experiment, we set $\gamma$ to $0.1$ by default.

For the other terms, $\mathbb{I}(i \notin d\_rep_u) + \mathbb{I}(i \notin d\_rep_v)$ is related to the replication factor among GPUs, while $\alpha \frac{e_{count}[i]}{|E| / p}$ and $\beta \frac{v_{count}[i]}{|V|/p}$ restrict the edge and vertex imbalance factor respectively.
The replication factor is defined as   
\begin{math}
\frac{\sum_{i=1}^{p} |V_{i}|}{|V|}
\end{math}, that represents the average number of replicas for a vertex.
The edge imbalance factor is defined as  
\begin{math}
\frac{\max_{i=1,...,p} |E_{i}|}{|E|/p}
\end{math} 
, while the vertex imbalance factor is defined as 
\begin{math}
\frac{\max_{i=1,...,p} |V_{i}|}{\sum_{i=1}^{p} |V_{i}|/p}
\end{math}
. Both of them are used to measure the balance of partition results.
\section{Experiments and Analysis} \label{sec:exp}

\begin{table*}[!htb]
	\caption{Statistics of GNN dataset graphs}
	\label{tab:dataset}
	\centering
	\begin{tabular}{ccccc}
	    \hline
	    Dataset & $|V|$ & $|E|$ & Input Dim & Output Dim \\ \midrule
		Reddit & $232,965$ & $11,606,919$ & $602$  & $41$ \\ 
		ogbn-products & $2,449,029$ & $61,859,140$ & $100$  & $47$ \\ 
		ogbn-papers100M & $111,059,956$ & $1,615,685,872$ & $200$ & $172$ \\ 
		Friendster & $65,608,366$  & $1,806,067,135$  & $64$ & $32$ \\
		\bottomrule
	\end{tabular}
\end{table*}
In this section, we test CDFGNN in a heterogeneous environment with multiple physical nodes and multiple GPUs per node. We compare CDFGNN with the state-of-the-art distributed full-batch graph neural network training frameworks on several datasets. 
In addition, we select some representative graph partition algorithms to analyze the influence of different graph partition algorithms on the distributed full-batch GNN training. Finally, we conduct the ablation study to demonstrate the effectiveness of each component.

\subsection{Experiment Setup and Datasets}
In the experiment, we compare CDFGNN with the state-of-the-art distributed full-batch graph neural network training frameworks SANCUS~\cite{peng2022sancus} and CAGNET~\cite{tripathy2020reducing}. We select four datasets: Reddit~\cite{hamilton2017inductive}, ogbn-products~\cite{chiang2019cluster}, ogbn-papers100M~\cite{wang2020microsoft} and Friendster~\cite{Friend} for comparing their performance.
The statistics of these graphs are listed in table~\ref{tab:dataset}.
The Friendster does not provide input features and output categories. 
We randomly generate these data to test the training efficiency of different frameworks on the large graph.

Our experiment platform is a 2-node cluster, with each node has $8$ Nvidia A800 80G GPU. 
The communication within the physical nodes is based on the 16-channel PCIe 4.0, and the communication across the physical nodes is based on the InfiniBand.
We use the NCCL for communication, and list the communication performance in Table~\ref{tab:comm}.


\begin{table}[!tb]
	\caption{Communication Performance between GPUs}
	\label{tab:comm}
	\centering
	\begin{tabular}{ccc}
	    \hline
	    Environment & Pattern & Bandwidth \\ \midrule
        PCIe & Peer2Peer & 22.70 GB/s \\ 
		InfiniBand & Peer2Peer & 8.27 GB/s \\ 
		PCIe & Broadcast & 19.47 GB/s \\ 
		InfiniBand & Broadcast & 11.98 GB/s \\ 
		\bottomrule
	\end{tabular}
\end{table}


We adopt the simple $2$-layer graph convolutional network as our test model. The dimensions of the input and output features are determined by the datasets, while the dimension of the hidden layer is set to $64$ by default. We adopt the cross-entropy function as the loss function, and the Adam optimizer~\cite{kingma2014adam} to update the model parameters. The initial learning rate is set to $0.01$ by default.

\subsection{Distributed Training Efficiency Comparison}
First, we compare CDFGNN with the current state-of-the-art distributed full-batch GNN training frameworks SANCUS and CAGNET. During the training process, we adopted the same GNN model. Meanwhile, we implement CDFGNN with $2$ famous vertex-cut GP algorithms: HEP~\cite{mayer2021hybrid} and DNE~\cite{hanai2019distributed}.
Thus, we can analysis the influence of different graph partition algorithms on the training efficiency.
We also set the $\gamma$ to $0.1$ and $0.0$ respectively for testing the performance of our hierarchical GP algorithm, and represent them as EBV$_{\gamma=0.1}$ and EBV$_{\gamma=0.0}$.
The EBV$_{\gamma=0.0}$ is equivalent to the original EBV algorithm.


\begin{table*}[!htb]
    \caption{The Statistics of differnet graph partition algorithms}
    \centering
    \begin{tabular}{cccccccc}
        \hline
        Dataset & GP algorithm &  Nodes & GPU per Node & Inner  & Outer  & RF & Edge IF \\
        \hline
        reddit & EBV$_{\gamma = 0.0}$ & 2 & 2 & 104217 & 138583 & 2.9027 & 1.0054 \\
        reddit & EBV$_{\gamma = 0.1}$ & 2 & 2 & 105412 & 117879 & 3.0860 & 1.0022 \\
        reddit & HEP & 2 & 2 & 36662 & 52886 & 1.6084 & 1.2693 \\
        reddit & DNE & 2 & 2 & 65578 & 118788 & 2.1025 & 1.2558 \\
        
        ogbn-products & EBV$_{\gamma = 0.0}$ & 2 & 4 & 695905 & 639459 & 3.1788 & 1.0002 \\
        ogbn-products & EBV$_{\gamma = 0.1}$ & 2 & 4 & 952727 & 481147 & 3.3379 & 1.0008 \\
        ogbn-products & HEP & 2 & 4 & 143711 & 127261 & 1.3304 & 1.2323 \\
        ogbn-products & DNE & 2 & 4 & 367460 & 406408 & 1.9363 & 1.1527 \\
        
        friend & EBV$_{\gamma = 0.0}$ & 2 & 8 & 12395102 & 9988776 & 3.7237 & 1.0002 \\
        friend & EBV$_{\gamma = 0.1}$ & 2 & 8 & 19586785 & 5465465 & 4.0322 & 1.0011 \\
        friend & HEP & 2 & 8 & 5794009  &  4675810 &  1.7048 &  1.776 \\
        friend & DNE & 2 & 8 & 8737134  & 11670478 &  2.3546 &  1.7455 \\
        
        papers100M & EBV$_{\gamma = 0.0}$ & 2 & 8 & 20438528 & 16362561 & 3.6503 & 1.0000 \\
        papers100M & EBV$_{\gamma = 0.1}$ & 2 & 8 & 32241760 & 9924817 & 4.0347 & 1.0001 \\
        papers100M & HEP & 2 & 8 & 5826661 & 3085959 & 1.3144 & 2.0204 \\
        papers100M & DNE & 2 & 8 & 10021186 & 13819120 & 2.1475 & 1.3866 \\
        \hline
    \end{tabular}
    \label{tab:part_comp}
\end{table*}

\begin{figure*}[!htb]
	\centering
	\begin{minipage}{0.48\linewidth}
    	\centerline{\includegraphics[width=1\textwidth]{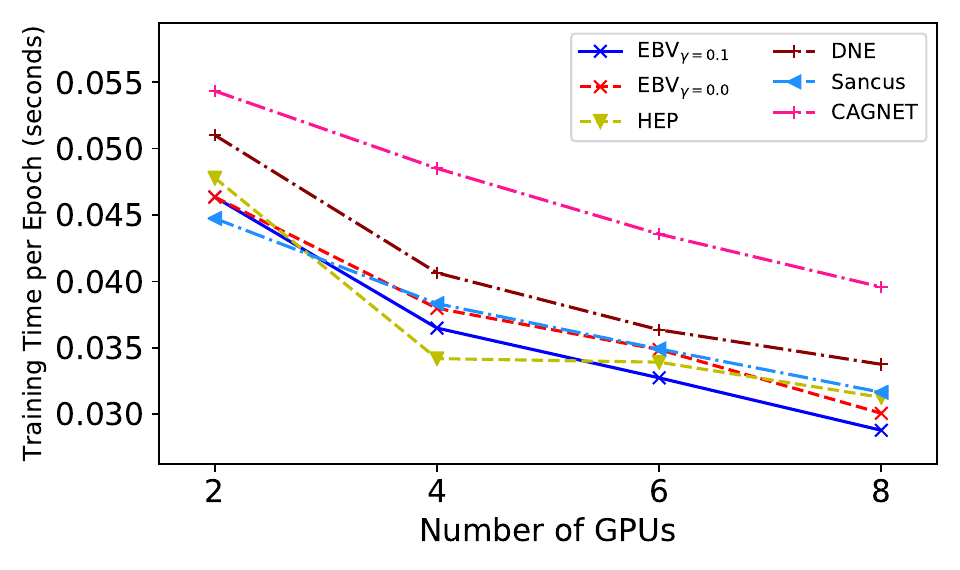}}
	\end{minipage}
	\centering
	\hfill
	\begin{minipage}{0.48\linewidth}
	    \centerline{\includegraphics[width=1\textwidth]{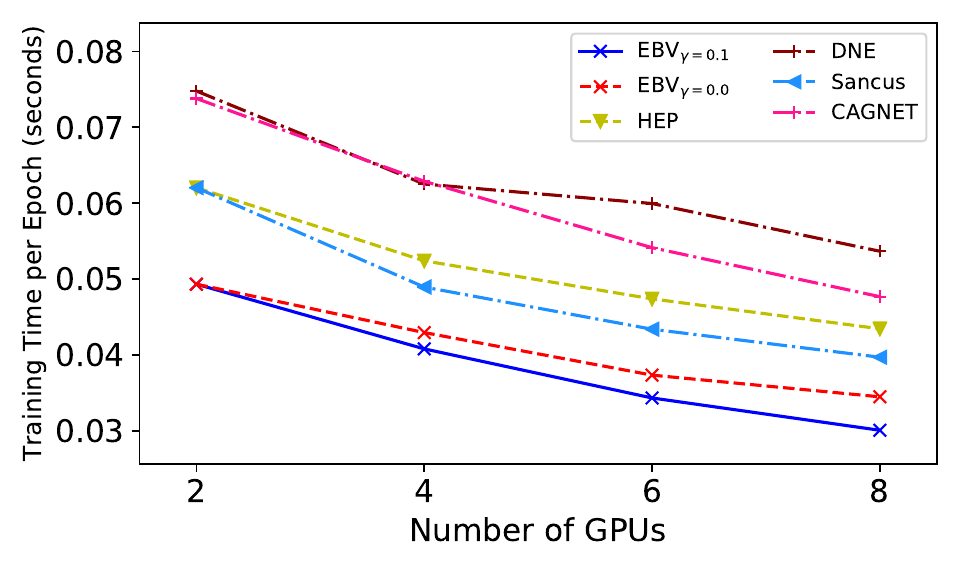}}
	\end{minipage}
	\vfill
	\begin{minipage}{0.48\linewidth}
    	\centerline{(a) Reddit}
	\end{minipage}
	\centering
	\hfill
	\begin{minipage}{0.48\linewidth}
    	\centerline{(b) ogbn-products}
	\end{minipage}
	
    \vfill
	\begin{minipage}{0.48\linewidth}
	    \centerline{\includegraphics[width=1\textwidth]{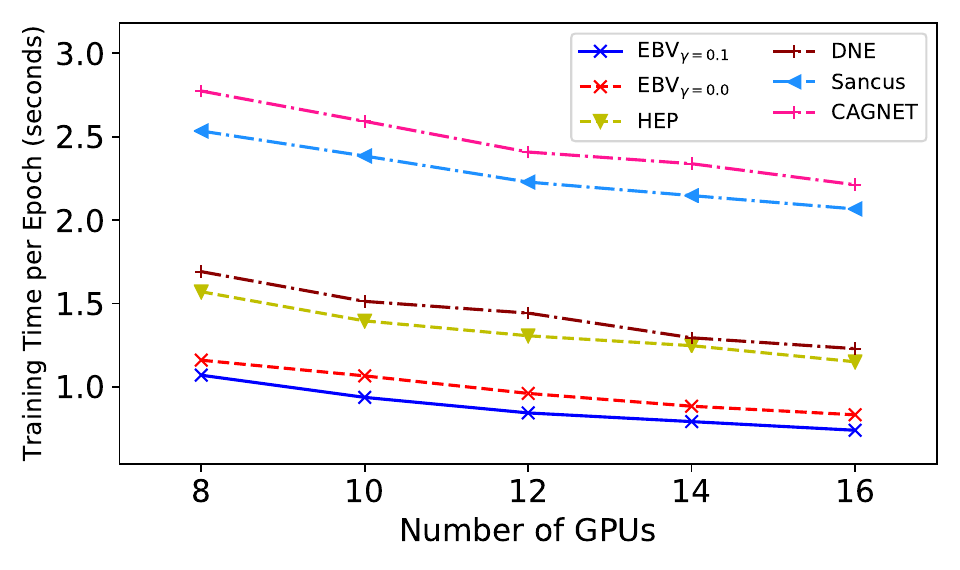}}
	\end{minipage}
    \hfill
	\begin{minipage}{0.48\linewidth}
	    \centerline{\includegraphics[width=1\textwidth]{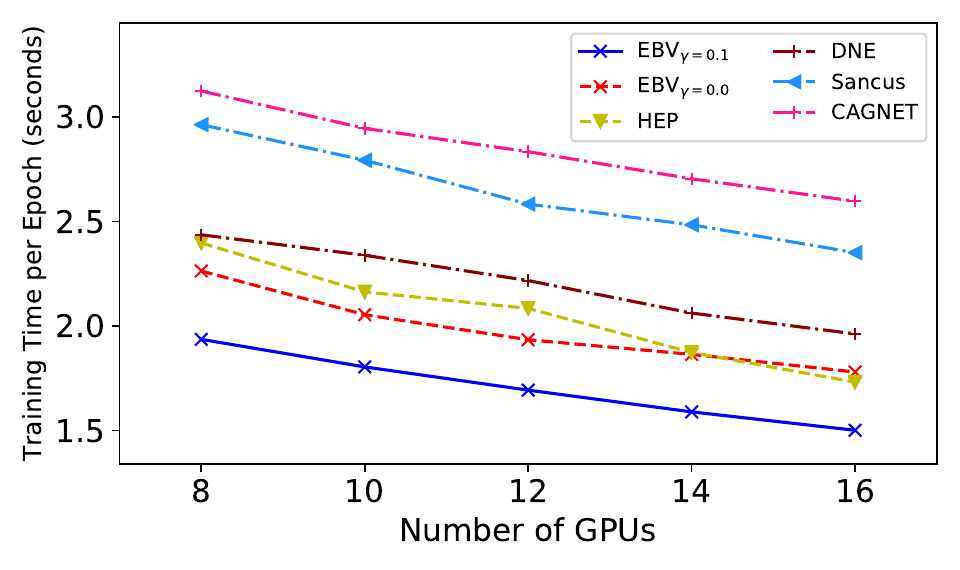}}
	\end{minipage}
	
	\vfill
	\begin{minipage}{0.48\linewidth}
    	\centerline{(c) ogbn-papers100M}
	\end{minipage}
    \hfill
	\begin{minipage}{0.48\linewidth}
    	\centerline{(d) Friendster}
	\end{minipage}
	
	\caption{Comparison of average training time per epoch.}
	\label{fig:timeCompare}
\end{figure*}

Figure~\ref{fig:timeCompare} presents the average training time per epoch for different GNN training frameworks on four datasets. The GPUs we use are evenly distributed on two physical nodes. We use EBV$_{\gamma=0.1}$, EBV$_{\gamma=0.0}$, HEP and DNE to represent the training efficiency when combined with CDFGNN.

From figure~\ref{fig:timeCompare}, we can find that EBV$_{\gamma=0.1}$ achieves the best performance in almost all cases and reduces the training time by $30.39\%$ compared to Sancus on average.
Sancus performs better than CAGNET and even outperforms EBV$_{\gamma=0.1}$ in the smallest case (2 GPUS, Reddit). However, the performance of Sancus is limited for larger cases.
Comparing EBV$_{\gamma=0.1}$ and EBV$_{\gamma=0.0}$, setting $\gamma$ to $0.1$ can achieve better training efficiency on our cluster. 
It is worth noting that when there are only $2$ GPUs, the partition results of EBV$_{\gamma=0.1}$ and EBV$_{\gamma=0.0}$ are equivalent.
The HEP algorithm also performs well in the smallest dataset (Reddit).
But EBV$_{\gamma=0.1}$ leads HEP by a larger margin on other datasets.
Therefore, we believe that the CDFGNN framework combined with the EBV$_{\gamma=0.1}$ can achieve the best training efficiency on graph neural network datasets of different sizes.


\subsection{Ablation Study}
Next, we study the reasons for different performances when different graph partition algorithms are combined with CDFGNN.
We compare graph partition results generated by different GP algorithms in Table~\ref{tab:part_comp}. 
The ``Inner'' and ``Outer'' columns mean the maximum number of inner and outer connections on a single subgraph. The number of inner connections refers to the number of messages within the physical node that need to be sent from this device, and outer connections refer to the messages across the physical nodes.
We also present the replication factor (RF) and edge imbalance factor (Edge IF) defined in Section~\ref{sec:hierarchical} for analyzing. 
Since all GP algorithms compared here are vertex-cut algorithms, we do not give the vertex imbalance factor.

Table~\ref{tab:part_comp} shows the characteristics of all GP algorithms on $4$ datasets.
Setting $\gamma$ to $0.1$ can greatly reduce the number of outer connections ($31.08\%$ on average) at the expense of more inner connections.
Considering the inter and outer communication bandwidth comparison in Table~\ref{tab:comm}, the overall communication overhead can be greatly reduced, thereby improving training efficiency.
The HEP algorithm achieves the smallest inner and outer connections. However, the graph partition results are significantly imbalanced. 
Thus, it leads to imbalanced computing and communication overhead and reduces the overall training efficiency.


We decompose the computation and communication time of different GP algorithms and communication optimization methods based on CDFGNN for further analysis.
We list the computation and communication time per epoch of each GPUs in Figure~\ref{fig:break}. We also provide the corresponding average training time with the dashed lines.
When comparing these GP algorithms EBV$_{\gamma=0.1}$, EBV$_{\gamma=0.0}$, HEP and DNE, all communication optimization methods are used by default.
When comparing the communication optimization methods, the GP algorithm used is EBV$_{\gamma=0.1}$.
The ``Cache'' means only the adaptive cache mechanism is used, while ``Quantify'' means only the communication quantization is used. ``Baseline'' means that no communication optimization methods are used.

\begin{figure*}[!htb]
	\centering
    \begin{minipage}{0.48\linewidth}
    	\centerline{\includegraphics[width=1\textwidth]{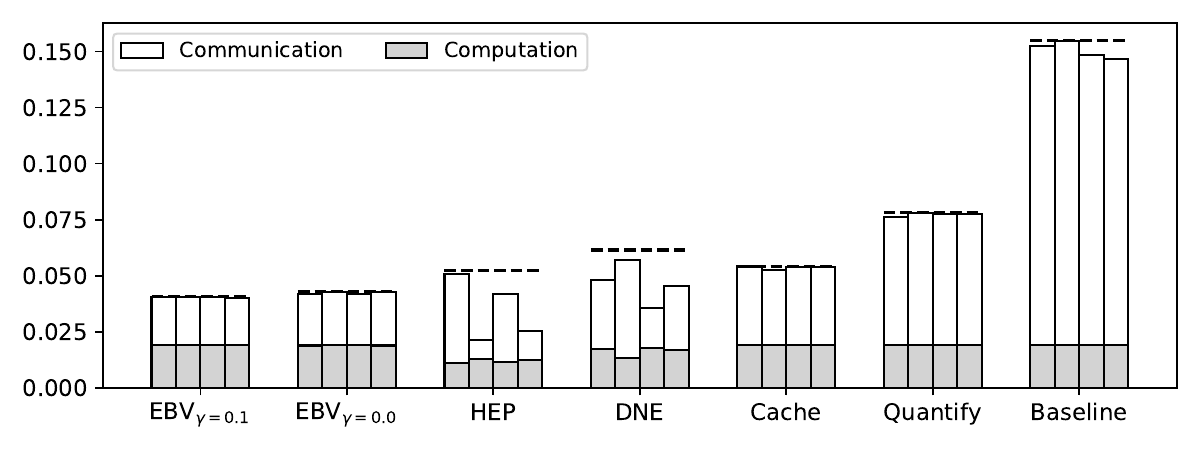}}
	\end{minipage}
	\centering
	\hfill
	\begin{minipage}{0.48\linewidth}
	    \centerline{\includegraphics[width=1\textwidth]{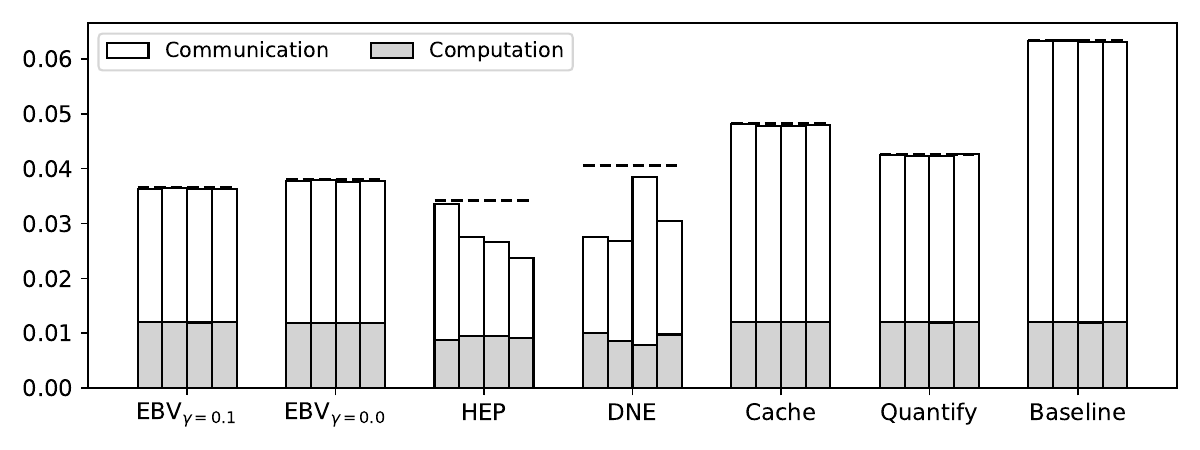}}
	\end{minipage}
	\vfill
	\begin{minipage}{0.48\linewidth}
    	\centerline{(a) ogbn-products}
	\end{minipage}
	\centering
	\hfill
	\begin{minipage}{0.48\linewidth}
    	\centerline{(b) Reddit}
	\end{minipage}
	\caption{Time breakdown of different GP algorithms and communication optimization methods.}
	\label{fig:break}
\end{figure*}

As shown in Figure~\ref{fig:break}, comparing with EBV$_{\gamma=0.1}$, the computation time of EBV$_{\gamma=0.0}$ is roughly the same. However, the communication time of EBV$_{\gamma=0.0}$ is longer. HEP and DNE have significant workload imbalances, thus restricting their training performance.

Meanwhile, both the adaptive cache mechanism and communication quantization can greatly reduce the communication overhead without affecting the computation overhead. 
We include the extra calculation time (quantization and dequantization for communication quantization, caching comparison for adaptive cache mechanism) into the communication time for a fair comparison. Therefore, the communication time is not directly proportional to the number of communication messages.
On ogbn-products, the adaptive cache mechanism achieves better communication optimization, while on Reddit the communication quantification is more efficient.
When combining both methods (EBV$_{\gamma=0.1}$), we achieve the best performance.

We also analysis the message sending percentage of each layer with the adaptive cache mechanism in Figure~\ref{fig:cacheAndsend}. To better understand the cache mechanism during different training epochs, we further provide the cache threshold $\epsilon$.
Figure~\ref{fig:cacheAndsend} shows the sending percentage and cache threshold on ogbn-products and Reddit with $4$ and $8$ GPUs respectively.
It can be found that in the middle stage of training, only few messages are sent, thus greatly reducing communication overhead. This phenomenon is consistent with our hypothesis.
Furthermore, at about $50-100$ training epochs on ogbn-products, almost no vertex features are sent during the forward propagation. Meanwhile, the cache threshold is dynamically adjusted to a larger value in the middle of training and smaller at other times.



\begin{figure*}[!htb]
	\centering

    \begin{minipage}{0.9\linewidth}
		\centerline{\includegraphics[width=1\textwidth]{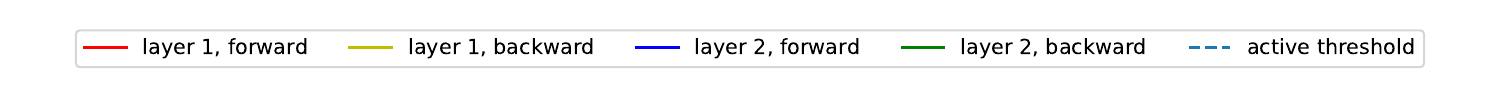}}
	\end{minipage}\\
 
	\begin{minipage}{0.45\linewidth}
    	\centerline{\includegraphics[width=1\textwidth]{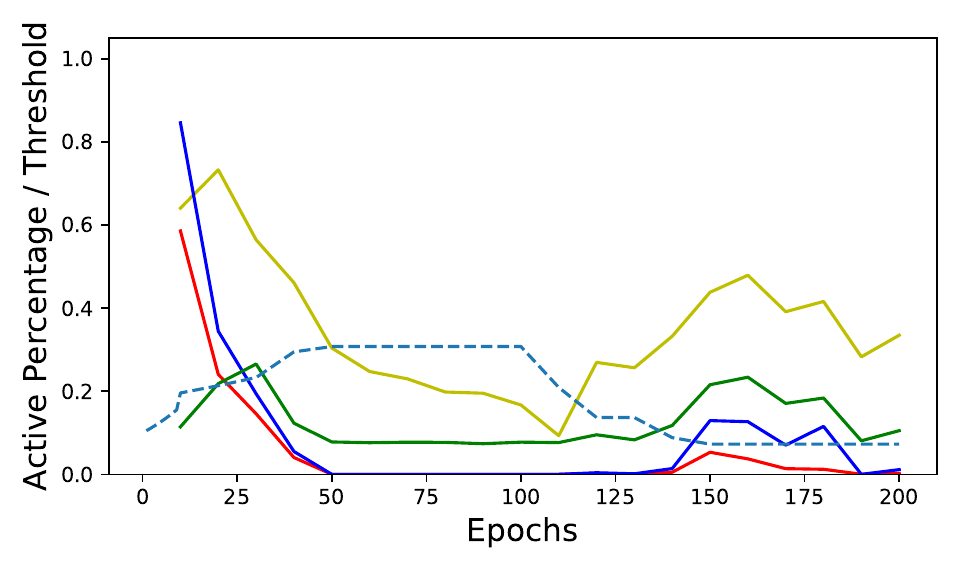}}
	\end{minipage}
	\centering
	\hfill
	\begin{minipage}{0.45\linewidth}
	    \centerline{\includegraphics[width=1\textwidth]{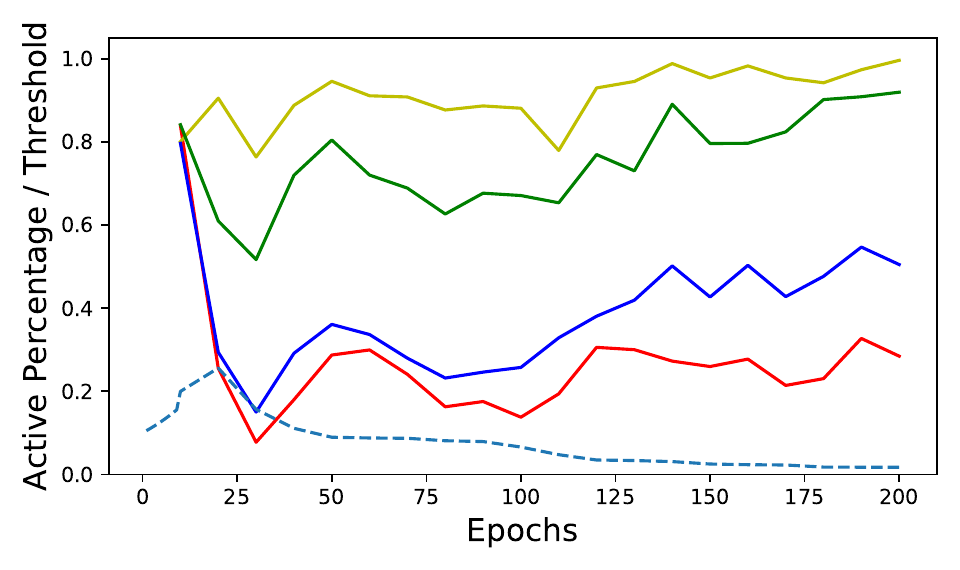}}
	\end{minipage}
	
	\vfill
	\begin{minipage}{0.48\linewidth}
    	\centerline{(a) ogbn-products}
	\end{minipage}
	\centering
	\hfill
	\begin{minipage}{0.48\linewidth}
    	\centerline{(b) Reddit}
	\end{minipage}
	
	\caption{Percentage of cache threshold and sending messages.}
	\label{fig:cacheAndsend}
\end{figure*}

\begin{figure*}[!htb]
    \centering
	\begin{minipage}{0.48\linewidth}
    	\centerline{\includegraphics[width=1\textwidth]{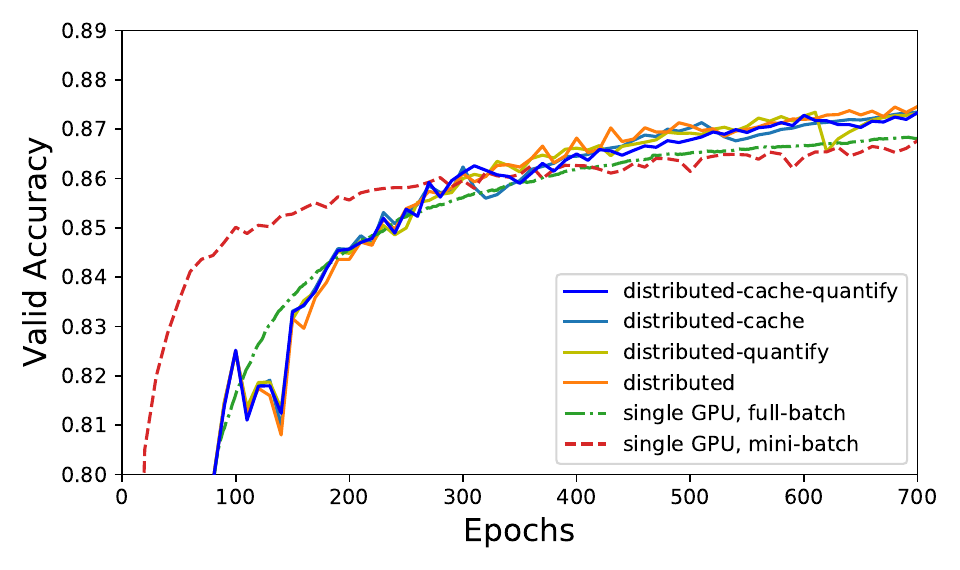}}
	\end{minipage}
	\centering
	\hfill
	\begin{minipage}{0.48\linewidth}
	    \centerline{\includegraphics[width=1\textwidth]{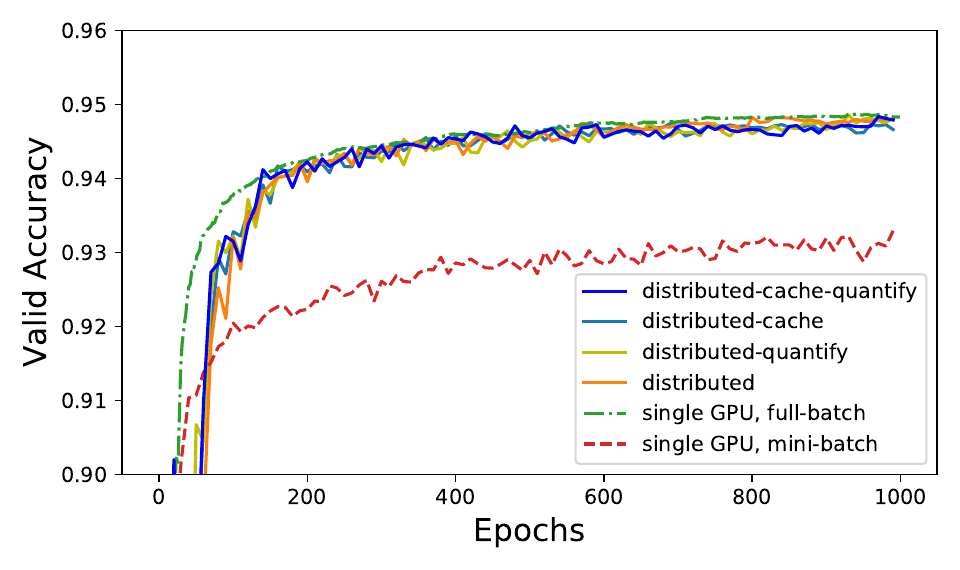}}
	\end{minipage}
	
	\vfill
	\begin{minipage}{0.48\linewidth}
    	\centerline{(a) ogbn-products}
	\end{minipage}
	\centering
	\hfill
	\begin{minipage}{0.48\linewidth}
    	\centerline{(b) Reddit}
	\end{minipage}

	
	
	\caption{The convergence curve of evaluate accuracy.}
	\label{fig:convergence}
\end{figure*}

Finally, we verify the convergence of evaluate accuracy of CDFGNN in Figure~\ref{fig:convergence}. 
In addition to the distributed training approaches of CDFGNN, we also implement the full-batch and mini-batch training methods on the single GPU for comparison. 

The results in figure~\ref{fig:convergence} show that using the adaptive cache mechanism and communication quantification method has almost no impact on the convergence of accuracy.
Due to the small random errors when distributed training, the accuracy in some epochs is even higher than that of single GPU full-batch training. Besides, the mini-batch training method significantly reduce the accuracy, especially on Reddit. That is because we limit the maximum number of neighbors when sampling, and the average degree of Reddit is very large.



\section{Related Work} \label{sec:related}
The research on distributed graph neural network training is still in the early stages~\cite{abadal2021computing}, and only a few these works are based on GPU. 
Compared with traditional distributed large-scale graph computing frameworks~\cite{chen2019powerlyra,fan2017grape,malewicz2010pregel}, the communication overhead of distributed GNN training tasks is more serious. 
This is because the distributed training of each GCN layer or GAT layer requires sending/receiving features and gradients of neighbor vertices, where the dimension of vertex features and gradients is usually very large.

Many existing distributed graph neural network training frameworks adopt the centralized architecture. 
For example, NeuGraph~\cite{ma2019neugraph} proposed a GNN training framework in a single-node multi-GPU environment. They use METIS~\cite{karypis1998fast} as the graph partitioning algorithm, and introduce graph computation optimizations into the management of data partitioning,
scheduling, and parallelism. However, their work is not open source. 
RoC~\cite{jia2020improving} dynamically partitions the graph through an online regression model and proposes a inter-process memory management method, but it also leads to a complex execution workflow. 
PaGraph~\cite{lin2020pagraph} implements static caching of vertices with high degree in GPU memory, and use a special graph partitioning algorithm to balance workload and reduce cross-device data access.
G$^3$~\cite{liu2020g3} utilizes parallel graph optimization to improve graph operations in GPU systems, Grain~\cite{zhang2021grain} selects GNN data by focusing on maximizing social influence, and RDD~\cite {zhang2020reliable} uses unlabeled data. 
AliGraph~\cite{zhu2019aligraph} also uses static caching technology, but only supports CPU clusters.
AGL~\cite{zhang2020agl} uses MapReduce operations to simultaneously optimize the training and inference phases. In order to reduce and balance communication, DistDGL~\cite{zheng2020distdgl} uses a load-balanced graph partitioning algorithm. 
Most of these systems suffer from heavy communication overhead and therefore cannot scale to large-scale applications.
Besides, we should notice that except for NeuGraph and Roc, which support full-batch graph neural network training, other frameworks are mini-batch training methods that require sampling.
\section{Conclusion and Future Work} \label{sec:conclusion}
In this paper, we propose a cache-based distributed full-batch graph neural network training framework CDFGNN. 
To address the problem of excessive communication in existing full-batch training frameworks, we design three optimizations: adaptive cache mechanism, communication quantization, and hierarchical graph partition. 
With these improvements, CDFGNN outperforms the state-of-the-art distributed full-batch training frameworks.
Besides, we theoretically and experimentally prove that the convergence accuracy of CDFGNN is not degraded.
Therefore, we believe that CDFGNN can greatly improve the distributed training efficiency for large-scale graphs.

In the future, we want to make full use of the high-speed communication equipment such as NVLink to further reduce the communication overhead.

\newpage

\bibliography{eurosys}
\bibliographystyle{IEEEtran}

\end{document}